\newtheorem{theorem}{Theorem}[section]
\newtheorem{lemma}[theorem]{Lemma}
\newtheorem{definition}[theorem]{Definition}
\newtheorem{assumption}[theorem]{Assumption}
\begin{document}
%
\title{Score-Based Quickest Change Detection and Fault Identification for Multi-Stream Signals}
%
%
%

\author{Wuxia~Chen, Sean~Moushegian, 
Vahid~Tarokh,~\IEEEmembership{Fellow,~IEEE,  } Taposh~Banerjee}
\maketitle

\begin{abstract}
This paper introduces an approach to multi-stream quickest change detection and fault isolation for unnormalized and score-based statistical models. Traditional optimal algorithms in the quickest change detection literature require explicit pre-change and post-change distributions to calculate the likelihood ratio of the observations, which can be computationally expensive for higher-dimensional data and sometimes even infeasible for complex machine learning models. To address these challenges, we propose the min-SCUSUM method, a Hyvärinen score-based algorithm that computes the difference of score functions in place of log-likelihood ratios.
We provide a delay and false alarm analysis of the proposed algorithm, showing that its asymptotic performance depends on the Fisher divergence between the pre- and post-change distributions. Furthermore, we establish an upper bound on the probability of fault misidentification in distinguishing the affected stream from the unaffected ones. 
\end{abstract}

\begin{IEEEkeywords}
Change Detection, Hyvärinen Score, Unnormalized Models, Fault Identification, Multi-Stream Systems.
\end{IEEEkeywords}

%
\IEEEpeerreviewmaketitle

\section{Introduction}
Detecting abrupt changes in the statistical characteristics of sequentially acquired data is a critical problem that spans a wide spectrum of applications, including power system transmission outage \cite{chen2015quickest}, quality control \cite{hawkins2003changepoint},  target identification \cite{blackman2004multiple},  biology \cite{siegmund2013change}, and more.  In statistics, this problem is formulated as the problem of quickest change detection (QCD), where the goal is to detect a change in the distribution of a sequence of random variables as swiftly as possible while controlling the rate of false alarms \cite{shiryaev1963optimum}.  

One of the most prominent algorithms for QCD is the Cumulative Sum (CUSUM) algorithm \cite{page1954continuous}, which is particularly effective in the context of independent and identically distributed (i.i.d.) data before and after the change. The CUSUM algorithm is not only practical due to its recursive structure but also theoretically optimal in minimizing the worst-case detection delay, as described by Lorden's criterion \cite{lorden1971procedures}, while adhering to a predefined false alarm rate \cite{moustakides1986optimal}. For an overview of QCD and other optimal algorithms, we refer to 
\cite{veeravalli2014quickest, tartakovsky2014sequential, tartakovsky2019sequential, poor2008quickest, basseville1993detection}. 

The CUSUM and other optimal algorithms from the QCD literature depend explicitly on knowing both pre- and post-change distributions, a task that becomes computationally intensive with high-dimensional data or when explicit distributions are not readily available. For instance, in machine learning applications involving energy-based models \cite{lecun2006tutorial} or score-based deep generative models \cite{song2019generative}, computing the necessary likelihood ratios for these algorithms can be particularly demanding. 
To address this issue, the authors in \cite{Wuetal_IT_2024, Banerjee_SQA_2024} developed a score-based approach to QCD using the Hyvärinen score of the probability densities. These score-based algorithms can be applied to data where the densities 
are available in unnormalized or score-based forms, which is the situation frequently encountered in machine learning applications. 

In many real-world applications, it is not only important to detect changes quickly but also to accurately identify the nature of the change. This dual objective leads to the problem of sequential change diagnosis, where the goal is to determine the correct post-change distribution among several possibilities once a change has been detected. This problem has also been studied extensively in the literature \cite{nikiforov1995generalized, nikiforov2002simple, nikiforov2003lower, nikiforov2016sequential, tartakovsky2014sequential, tartakovsky2019sequential}, often leading to algorithms that are computationally hard to implement. In a recent work \cite{warner2024worst}, however, the authors have shown that a popular and computationally efficient min-CUSUM algorithm is also asymptotically optimal for a well-defined problem formulation. 
 In the min-CUSUM algorithm, when there are a finite number of post-change alternatives, the CUSUM statistic is calculated for each alternative. A change is declared when any one of the CUSUM statistics crosses a threshold. The post-change distribution corresponding to the CUSUM statistic that triggered the alarm is identified as the true post-change distribution.

While the min-CUSUM algorithm is computationally efficient and optimal, it cannot be applied to high-dimensional data, 
where the precise form of the distributions is not available. Consider the problem of anomaly detection using parallel video streams. Here, the goal is to detect an anomaly in the sequence of images in each camera. Once an anomaly has been detected, we need to identify the video stream where the anomaly was seen. This problem has applications, for example, in military surveillance and traffic data. Because of the high-dimensional nature of the images, the statistical density of the images is intractable to learn. As a result, the min-CUSUM algorithm from \cite{warner2024worst} cannot be applied to this problem. 

In this paper, we propose the min-SCUSUM algorithm, a Hyvärinen score-based modification of the min-CUSUM algorithm, for QCD and fault isolation in multi-stream data. The min-SCUSUM algorithm is obtained by replacing the log likelihood ratios in the min-CUSUM algorithm by differences of Hyvärinen scores (see Section~\ref{sec:min_SCUSUM_diagnosis}). 
The Hyvärinen scores can be learned directly from data using neural networks \cite{song2019generative} \cite{vincent2011connection}. In this paper, we also analyze the delay, false alarm, and fault identification performance of the min-SCUSUM algorithm. We note that many aspects of our analysis are novel since the classical methods of analysis employed in \cite{warner2024worst} must be modified because the min-SCUSUM algorithm is no longer based on likelihood ratios. Finally, we apply our algorithm to an anomaly detection application for video data.

To summarize, the main contributions of our paper are as follows:
\begin{itemize}
    \item We propose the min-SCUSUM algorithm, a new multi-stream QCD algorithm designed for quickest detection and fault identification in unnormalized and score-based models. The algorithm replaces the negative log-likelihood terms in CUSUM with a multiple of the differences of Hyvärinen scores.
    \item We provide a detailed analysis of the delay and false alarm performance of the min-SCUSUM algorithm in the i.i.d. setting, showing that it achieves asymptotic optimality under certain conditions.
    \item We give an upper bound on the probability of mis-identification given that there is no false alarm, and further validate the bound in a numerical experiment using synthesized high-dimensional data, where the empirical misidentification probabilities remain below the theoretical upper bound. 
    \item Finally, we apply our algorithm to detect changes in multi-channel video streams generated from real-world videos, demonstrating its effectiveness in a high-dimensional data environment.  While there can be several ways to detect and isolate anomalies in video data, we remark that our proposed algorithm works for arbitrary high-dimensional data, not just image data. Moreover, we provide an algorithm with provable performance guarantees on delay, false alarm rate, and the probability of mis-identification. 
\end{itemize}

The rest of the paper is organized as follows: 
In Section~\ref{sec:problem_formulation}, we formulate the quickest change detection and the fault isolation problem. 
In Section~\ref{sec:Hyvärinen_score}, we review the Hyvärinen score and the score-based change detection algorithm from \cite{Wuetal_IT_2024}. In Section ~\ref{sec:min_SCUSUM_diagnosis}, we introduce the min-SCUSUM algorithm, an algorithm for score-based fault detection and identification. In Section ~\ref{sec:FA_Delay_properties}, we analyze the delay, false alarm, and mis-identification performance of the min-SCUSUM algorithm. In Section~\ref{sec:numerical}, we apply the algorithm to real video data and simulated high-dimensional data to verify its effectiveness.

\subsection{Related Work}
Quickest change detection (QCD) in multi-stream settings has been extensively studied. In \cite{xie2013sequential},  a mixture-based generalized likelihood ratio procedure is proposed for detecting sparse mean shifts. In \cite{wang2015large}, the authors combined hard thresholding with shrinkage estimation to handle both sparse and dense scenarios. In another work  \cite{fellouris2017multistream}, an asymptotic framework has been developed for multichannel detection with growing sensor counts. In \cite{xu2021optimum}, a myopic sampling strategy for sequential detection under sensing constraints is studied. Other related works \cite{xu2021multi, xu2022active, oleyaeimotlagh2023quickest} have contributed sampling-aware and adaptive procedures. 
However, all these studies assume that the distributions are precisely known, and the methods there are based on likelihood ratios. 

To overcome the limitations of likelihood-based methods for high-dimensional distributions, recent works \cite{Wuetal_IT_2024} and  \cite{Banerjee_SQA_2024} use a Hyvärinen score-based approach for QCD in a single stream. The Hyvärinen score \cite{hyvarinen2005estimation} can be computed for unnormalized and score-based models  \cite{song2019generative, vincent2011connection}. In this paper, we extend the work in \cite{Wuetal_IT_2024} to a multi-stream scenario and perform QCD and fault isolation. Our approach to QCD with fault isolation is classical \cite{chen2015quickest, tartakovsky2014sequential}, but we are motivated by the recent optimality theory developed in \cite{warner2024worst}. Sequential fault diagnosis—identifying the affected stream after a change—has also been addressed in \cite{nikiforov1995generalized, nikiforov2002simple, nikiforov2003lower, nikiforov2016sequential}. However, the optimal tests are based on pairwise likelihood ratios and hence the algorithms are computationally demanding. 

Other nonparametric approaches have also been developed for high-dimensional or structured data. The authors in  \cite{chu2022sequential} proposed graph-based sequential detection for non-Euclidean observations, while \cite{banerjee2018quickest} introduced a method based on maximal kNN coherence. These works relax distributional assumptions, but do not address fault identification or score-based modeling.

\section{Problem Formulation} \label{sec:problem_formulation}
In this problem, there are $d \in \mathbb{N}$ independent streams or channels. At each discrete time $n \in \mathbb{N}$, we observe $X_{1,n}, \dots, X_{d,n}$ simultaneously, where $X_{i,n} \in \mathbb{R}^m$ represents the high-dimensional observation from channel $i$. The random variables are defined on a probability space $(\Omega, \mathcal{F}, P_{\nu,j})$. 
Under the measure $P_{\nu,j}$, the random variables satisfy the following change-point model:
\begin{equation}
\label{eq:CPmodel1}
    \begin{split}
        X_{i, n} &\sim f_i, \quad i \neq j, \; n \geq 1, \\
        X_{j, n} &\sim f_j, \quad n \leq \nu, \\
        X_{j, n} &\sim g_j, \quad n  > \nu.
    \end{split}
\end{equation}
The random variables are independent conditioned on the change point. Thus, under the measure $P_{\nu, j}$, the density of observations in only the stream $j$ changes from $f_j$ to $g_j$. This affected stream is unknown. Our goal is to detect this change as soon as it occurs and also to identify the affected stream. 
We denote by $P_\infty$ the probability measure under which no change occurs (i.e., $\nu = \infty$). In this case, we have
\begin{equation}
\label{eq:CPmodel1}
    \begin{split}
        X_{i, n} &\sim f_i,  \quad 1 \leq i \leq d,  \; n \geq 1, \\
    \end{split}
\end{equation}

We denote by $\mathbb{E}_\infty$ the expectation under $P_\infty$, and by $\mathbb{E}_{\nu,j}$ the expectation under $P_{\nu,j}$. Furthermore, we introduce the shorthand notations $P_j \triangleq P_{0,j}$ and $\mathbb{E}_j \triangleq \mathbb{E}_{0,j}$. Finally, we define the index set $\mathcal{I} = \{1, \dots, d\}$.

To state the problem objective mathematically, our objective is to devise a stopping time $T$ and a diagnosis $D \in \mathcal{I}$, both adapted to the filtration $\mathcal{F}_n$ with $\mathcal{F}_n = \sigma(X_1, \dots, X_n)$. On the event $\{T = n, D = i\} \in \mathcal{F}_n$, the change is declared at time $n$ while identifying channel $i \in \mathcal{I}$ as the one affected due to the change. 
Furthermore, since each observation $X_{i,n} \in \mathbb{R}^m$ is high-dimensional, with $m \gg 1$, we assume that we do not precisely know the pre-change densities $\{f_j\}$ and the post-change densities $\{g_j\}$. 
The stopping time $T$ optimizes the trade-off between delay and false alarm.  We define the set $\mathcal{C}$ as the collection of all diagnosis procedures $(T, D) $ \cite{tartakovsky2014sequential}.

To evaluate how well a diagnosis procedure $(T, D) \in \mathcal{C}$ performs in preventing false alarms, we consider the average number of observations 
until a stopping under pre-change measure $P_\infty$, denoted by $\mathbb{E}_\infty[T]$. We define $\mathcal{C}(\alpha)$ as the subset of diagnosis procedures that ensure the expected stopping time $T$ under $P_\infty$ is at least $\frac{1}{\alpha}$,  where $\alpha \in (0,1)$ is the maximum tolerable false alarm rate:
\begin{equation}
\mathcal{C}(\alpha) \triangleq \left\{(T, D) \in \mathcal{C} : \mathbb{E}_\infty[T] \geq \frac{1}{\alpha}  \right\}.
\end{equation}

To evaluate a diagnosis procedure $(T, D) \in \mathcal{C}$ in terms of its ability to correctly identify the affected stream once a detection alarm is raised, we consider the conditional probability of incorrect identification given that no false alarm has occurred, denoted as $P_{\nu,j} \left( D \neq j \mid T > \nu \right)$. We define $\beta$ as the tolerance for the worst-case false identification given no false alarm \cite{warner2024worst},  $$\max_{j \in \mathcal{I}} \sup_{\nu \in \mathbb{N}} P_{\nu,j} \left(D \neq j \mid T > \nu \right) \leq \beta. $$
We define $\mathcal{C}(\alpha, \beta)$ as the subset of diagnosis procedures within $\mathcal{C}(\alpha)$ that controls the worst-case conditional probability of false identification to be no greater than $\beta$:
\begin{equation}
\begin{aligned}
 \mathcal{C}(\alpha, \beta) &\triangleq
 \left\{ (T, D) \in \mathcal{C}:\, \mathbb{E}_\infty[T] \geq \frac{1}{\alpha}, \right. \\
 &\quad \quad \quad\left.   \max_{j \in \mathcal{I}} \sup_{\nu \in \mathbb{N}} P_{\nu,j} \left(D \neq j \mid T > \nu \right) \leq \beta \right\}.
\end{aligned}
\label{eq:T_D_constraint}
\end{equation}

Moreover, to measure the efficiency of a diagnosis procedure $(T, D) \in \mathcal{C}$ in quickly detecting a change when the post-change distribution is $P_i$ for some $i \in \mathcal{I}$, we use Lorden's criterion \cite{lorden1971procedures}. Here, we consider the worst-case conditional expected detection delay, accounting for both the change point and the data observed up to the change:
\begin{equation}
    \mathcal{J}_i[T] \triangleq \sup_{\nu \in \mathbb{N}} \text{esssup}\,\mathbb{E}_{\nu,i}[T - \nu \mid \mathcal{F}_\nu, T > \nu].
\end{equation}

Our goal is to solve the following stochastic optimization problem
$$
  \inf_{(T,D) \in \mathcal{C}(\alpha,\beta)} \mathcal{J}_i[T] 
$$
uniformly over each $i \in \mathcal{I}$. Since the densities are unknown, we cannot implement the min-CUSUM algorithm studied in \cite{warner2024worst}. In this paper, we take a score-based approach and solve the above problem approximately.

\section{Hyvärinen score and Score Based Change Detection}\label{sec:Hyvärinen_score}
In this section, we review the Hyvärinen score and the score-based CUSUM (SCUSUM) algorithm, designed to overcome the limitations associated with the likelihood ratio-based CUSUM when dealing with unnormalized models. The SCUSUM algorithm is proposed and studied in \cite{Wuetal_IT_2024} for QCD in single-stream data. Similar to the conventional CUSUM algorithm, the SCUSUM algorithm is implemented recursively, making it well-suited for online applications due to its low computational and memory requirements.

\subsection{Hyvärinen Score}

Consider a random variable $X$ taking values in $\mathbb{R}^m$. Let $\mathcal{P}$ be a family of probability distributions defined on $\mathbb{R}^m$. Within this family, let $P \in \mathcal{P}$ be the true data-generating distribution of $X$, and let $Q \in \mathcal{P}$ be any alternative, postulated distribution.

First, we introduce proper scoring rules \cite{gneiting2007strictly}. A scoring rule is a function $S: (X, Q) \mapsto \mathbb{R}$ that measures how proper a distribution $Q$ is, in modeling the data $X$. We call the scoring rule  $S(X, Q)$
proper if, $\forall P \in \mathcal{P}$, the expected score $\mathbb{E}_{X \sim P}[S(X, Q)]$ is minimized when $Q = P$, with the minimum taken over all $Q \in \mathcal{P}$. Additionally, $S$ is strictly proper with respect to $\mathcal{P}$ if,  for any $ Q \in \mathcal{P}$, $Q \neq P$, there is 
$$
\mathbb{E}_{X \sim P}[S(X, Q)] > \mathbb{E}_{X \sim P}[S(X, P)].
$$

\begin{definition}
\label{def:Fisher_Divergence}
(\textbf{Fisher Divergence}) The Fisher Divergence from distribution $P$ to $Q$ is defined as 
\begin{equation}
D_F(P \parallel Q) \triangleq  \mathbb{E}_{X \sim P}  \left[ \frac{1}{2}  \| \nabla_X \log p(X) - \nabla_X \log q(X) \|_2^2 \right],    \label{eq:DF_def}
\end{equation} 
where $\|\cdot\|_2$ denotes the Euclidean norm, $\nabla_x$ is the gradient operator, and  $p(X)$ and $q(X)$ are the density functions of distribution $P$ and $Q$, respectively.
\end{definition}
\textit{Remark: } $\nabla_X \log p(X)$ and $\nabla_X \log q(X)$ remain invariant if $p(X)$ and $q(X)$ are scaled by any positive constant with respect to $X$. Thus, the Fisher divergence remains scale-variant under arbitrary constant scaling of density functions.

\begin{definition}
\label{def:Hyvärinen Score}
(\textbf{Hyvärinen Score})
The Hyvärinen score \cite{vincent2011connection} is a function $S_H(X, Q): (X, Q) \mapsto \mathbb{R}$, defined as 
$$
S_H(X, Q) \triangleq \frac{1}{2} \| \nabla_X \log q(X) \|_2^2 + \Delta_X \log q(X),
$$
where $\nabla_X$ is the gradient operator  and $\Delta_X = \sum_{i=1}^m \frac{\partial^2}{\partial x_i^2}$ is the Laplacian operators,  acting on $X = (x_1, \ldots, x_m)^\top$.
\end{definition}

\textit{Remark: } The Hyvärinen score $S_H$ possesses a key property: it is scale-invariant, 
inherited from the scale-invariant nature of the Fisher divergence. 
This property eliminates the need to compute the normalizing constant in unnormalized models. 
Specifically, suppose $\tilde{q}(x)$ is an unnormalized density function, and $q(x)$ is the corresponding 
normalized density defined as
$$
q(x) \;=\; \frac{\tilde{q}(x)}{\int_{x \in \mathcal{X}} \tilde{q}(x) \, dx},
$$
where the normalizing constant $\displaystyle \int_{x \in \mathcal{X}} \tilde{q}(x)\,dx$ is often 
analytically intractable. 
The Hyvärinen score depends only on $\tilde{q}(x)$ and its derivatives, bypassing the need to compute 
the normalizing constant explicitly. 
This makes it particularly useful in scenarios where only $\tilde{q}(x)$ is known, such as in 
Bayesian inference or energy-based models \cite{hyvarinen2005estimation}. Furthermore, it is evident that the Hyvärinen score is strictly proper, because  $D_F(P \parallel Q) > 0$ if $Q \neq P$.

\begin{assumption}[Hyvärinen’s Regularity Conditions \cite{hyvarinen2005estimation}]
\label{assump:Hyvärinen}
The following conditions hold for densities  $p(X) $ and $q(X)$ :  
\begin{enumerate}[label=(\roman*)]
    \item \label{assump:diff_p} The density $ p(X)$ is twice continuously differentiable.
    \item \label{assump:diff_q} The model score function $\nabla_X \log q(X) $ is differentiable.
    \item \label{assump:moments} The expectations
    $ \mathbb{E}_{X \sim p}\left[\|\nabla_X \log p(X)\|_2^2\right] $  
    and  
    $\mathbb{E}_{X \sim p}\left[\|\nabla_X \log q(X)\|_2^2\right] $
    are finite.
    \item \label{assump:boundary} The boundary condition 
    $p(X)\nabla_X \log q(X) \to 0$ holds as $ \|X\|_2 \to \infty $.
\end{enumerate}
\end{assumption}

Under Assumption~\ref{assump:Hyvärinen}, an estimation technique for learning the gradient log density is proposed in \cite{hyvarinen2005estimation} based on minimizing Fisher divergence \ref{def:Fisher_Divergence}, which can also be written as \cite{Wuetal_IT_2024} 
\begin{equation}
D_F(P \parallel Q) = \mathbb{E}_{X \sim P} \left[ \frac{1}{2} \| \nabla_X \log p(X) \|_2^2 + S_H(X, Q) \right]. \label{eq:DF_score}  
\end{equation}
Since $\frac{1}{2} \| \nabla_X \log p(X) \|_2^2$
is constant with respect to $ Q $, minimizing $ D_F(P \parallel Q) $ is equivalent  to minimizing the expected value of $ S_H(X, Q) $, though minimums will differ.

\subsection{Score-Based Quickest Change Detection}

Let  $\{X_n\}$, with $X_n \in \mathbb{R}^m$, be a sequence of independent random vectors such that at a change point, the law changes from the pre-change density $f$ to a post-change density $g$.  Both densities are known in unnormalized and score-based form.

\begin{definition}
    \label{def:SCUSUM score}
(\textbf{SCUSUM Score}) The instantaneous SCUSUM score function $Y_\lambda(X): X \mapsto \mathbb{R}$ is defined as
\begin{equation}
Y_\lambda(X) \triangleq \lambda  \cdot \left( S_H(X, f) - S_H(X, g) \right),   
\label{eq:SCUSUM_score}
\end{equation}
where $S_H(X, f)$ and $S_H(X, g)$ are the Hyvärinen score functions corresponding to the pre- and post-change distributions, respectively, and $\lambda $ is a pre-selected positive multiplier, s.t $\mathbb{E}_{\infty}\left[e^{\lambda \cdot \left( S_H(X, f) - S_H(X, g) \right)} \right] =1$, for details on the selection of $\lambda$, please refer to Lemma \ref{lem:exist_lambda}.

\end{definition}

The statistic for the score-based CUSUM (SCUSUM) algorithm is defined as follows \cite{Wuetal_IT_2024}:
\begin{equation}
    \begin{aligned}
    Z(0) & = 0,  \\
    Z(n) & \triangleq  \left(0, Z(n-1) + Y_\lambda(X_n)\right)^+, \quad \forall n \geq 1,  
    \end{aligned}   \label{eq:SCUSUM_statistics}
\end{equation}
with the stopping time 
\begin{equation}
 T_{\text{SCUSUM}} \triangleq \inf \{n \geq 1 : Z(n) \geq b\},   \label{eq:T_SCUSUM} 
\end{equation}
where $b > 0$ is a threshold, typically pre-selected to control false alarms.

Equivalently, we can rewrite $T_{\text{SCUSUM}}$ defined in (\ref{eq:SCUSUM_statistics}) and (\ref{eq:T_SCUSUM}) as 
\begin{equation}
T_{\text{SCUSUM}} \triangleq \inf \left\{ n \geq 1 : \max_{1 \leq k \leq n} \sum_{t=k}^{n} Y_\lambda(X_t) \geq b \right\}.   
\end{equation}

The SCUSUM algorithm is proposed and analyzed in \cite{Wuetal_IT_2024}, where it is shown that similar to the CUSUM algorithm, setting $b = |\log \alpha |$ ensures that
\begin{equation} 
\label{eq:scusum_far}
\mathbb{E}_\infty [T_{\text{SCUSUM}}] \geq \frac{1}{\alpha}, 
\end{equation}
and for this threshold, we get
\begin{equation} 
\label{eq:scusum_add}
\mathcal{J}[T_{\text{SCUSUM}}] \sim \frac{|\log \alpha |}{\lambda D_F(g \parallel f)}, \; \alpha \to 0,
\end{equation}
where $\mathcal{J}[T]$ is the Lorden's delay metric for a stopping time $T$.  here, let  $g(c)$  and  $h(c)$  be two functions. We write
\[
g(c) \sim h(c) \quad \text{as } c \to c_0, 
\]
to indicate that $ g(c) = h(c)(1 + o(1))$  as  $c \to c_0$.

\section{min-SCUSUM Stopping and Diagnosis Algorithm}
\label{sec:min_SCUSUM_diagnosis}
%



In this section, we discuss the main algorithm of this paper, the min-SCUSUM algorithm. 
In this algorithm, for each channel  $i \in \mathcal{I}$ , we compute the instantaneous SCUSUM score $ Y_i(X_{i,n}) $ using the Hyvärinen scores of the pre- and post-change distributions:  
$$
Y_{\lambda_i}(X_{i,n}) \triangleq \lambda_i \cdot \left( S_H(X_{i,n},f_i) - S_H(X_{i,n},g_i) \right),  
$$  
where  $\lambda_i > 0$ is a pre-selected multiplier satisfying  
$$
\mathbb{E}_{\infty}\left[ e^{ \lambda_i \cdot \left(S_H(X_{i,n}, f_i) - S_H(X_{i,n},g_i) \right)} \right] = 1 
$$ 
(see Lemma~\ref{lem:exist_lambda}).  
For simplicity, we abbreviate $Y_{\lambda_i}(X_{i,n})$ as $Y_i(X_{i,n})$ when no confusion arises.

The SCUSUM statistic  $Z_i(n)$  for channel  $i$  evolves as:  
\begin{equation}
\begin{aligned}
Z_i(0) & = 0,  \\
Z_i(n) & \triangleq \left(0, Z_i(n-1) + Y_{i}(X_{i,n})\right)^{+}.  
\end{aligned}   
\label{eq:SCUSUM_recursion}
\end{equation}  
Equivalently,  $Z_i(n)$  can be expressed in its non-recursive form, which accumulates the maximum cumulative score over all possible change points $1 \leq k \leq n $, as follows:
$$
Z_i(n) = \max_{1 \leq k \leq n} \sum_{t=k}^n Y_i(X_{i,t}).
$$  
\noindent The channel-wise stopping time  $T_i(b)$  is defined as the first instance the statistic  $Z_i(n)$  exceeds a non-negative threshold   $b_i$, 
\begin{equation}
T_i(b) \triangleq \inf \left\{ n \geq 1 : Z_i(n) \geq b_i \right\}.  
\label{eq:sigma_i(b)}
\end{equation}  
For simplicity, we assume all channels share a common threshold  $b \geq 0$, i.e., $b_i =b, \forall i \in \mathcal{I}$. Then, the min-SCUSUM stopping rule declares a change when any channel’s statistic crosses a non-negative threshold  $ b$ :  
\begin{equation}
T(b) \triangleq \min_{i \in \mathcal{I}} T_i(b).  
\label{eq:sigma_b}
\end{equation}  

\noindent $T(b) $ represents the earliest time at which any channel’s SCUSUM statistic  $Z_i(n)$  surpasses  $b$, hence, triggering an alarm.  


At the stopping time $T(b)$ , the diagnosis rule identifies the affected channel using a simple maximum criterion. The estimated affected channel  $D$ is given by:  
\begin{equation}
D \in \arg\max_{i \in \mathcal{I}} Z_i(T(b)),  
\end{equation}  
where ties are resolved arbitrarily.

This rule selects the channel whose SCUSUM statistic  $Z_i(T(b))$  is largest at the time of stopping. By construction, the diagnosed channel corresponds to the first statistic exceeding the threshold  $b$, ensuring alignment between detection and identification.  

The tuple  $(T(b), D) $ constitutes a sequential change diagnosis procedure. In the rest of this paper, we analyze the performance of this procedure, which helps us to
choose the threshold $b$ to satisfy given constraints on the rate of false alarms and the rate of mis-identifications. We also provide the average detection delay performance of this procedure.

\section{Analysis of the min-SCUSUM Algorithm} \label{sec:FA_Delay_properties}
In this section, we analyze the min-SCUSUM algorithm and provide guarantees on its consistency, false alarm rate, delay, and the rate of false isolation.

\subsection{Consistency}

We define
$$
U_i(X_{i,n}) \triangleq S_H(X_{i,n},f_i) - S_H(X_{i,n},g_i),
$$
and abbreviate $U_i(X_{i,n})$ as $U_i(n)$ when it is clear from the context.

For any channel $ i \in \mathcal{I}$, we assume the Fisher divergence from $g_i$ to $f_i$ is finite.  We define the $I_i$ as the Fisher Divergence from $g_i$ to $f_i$, by (\ref{eq:DF_def}) and (\ref{eq:DF_score}), when $ X \sim g_i$:  
\begin{equation}
\begin{aligned}
I_i  & \triangleq D_F(g_i || f_i) = \mathbb{E}_i\left[ \frac{1}{2} \| \nabla_x \log g_i(X) \|_2^2 + S_H(X, f_i) \right] \\
& = \mathbb{E}_i \left[ \frac{1}{2} \| \nabla_x \log g_i(X) \|_2^2 + S_H(X, g_i) \right. \\
& \qquad \left. - S_H(X, g_i) + S_H(X, f_i) \right] \\
& = \mathbb{E}_i\left[ U_i(X_{i,n}) \right].
\end{aligned}  
\label{eq:D_F_P_infty_P_i}
\end{equation}
Also, 
\begin{equation}
\begin{aligned}
D_F(f_i || g_i)  & = \mathbb{E}_\infty\left[ \frac{1}{2} \| \nabla_x \log f_i(X) \|_2^2 + S_H(X, g_i) \right] \\
& = \mathbb{E}_\infty \left[ \frac{1}{2} \| \nabla_x \log f_i(X) \|_2^2 + S_H(X, f_i) \right. \\
& \qquad \left. - S_H(X, f_i) + S_H(X, g_i) \right] \\
& = -\mathbb{E}_\infty\left[ U_i(X_{i,n}) \right].
\end{aligned}
\label{eq:D_F_P_infty_P_i}
\end{equation}

Thus, the min-SCUSUM algorithm defined in (\ref{eq:sigma_i(b)}) and (\ref{eq:sigma_b}) has a negative drift before the change and a positive drift after the change. This shows that the algorithm can detect changes consistently.

\subsection{False Alarm Analysis}

In this section, we obtain a lower bound on the mean time to a false alarm for the min-SCUSUM algorithm. The bound we obtain is identical to the one obtained for the min-CUSUM algorithm in \cite{warner2024worst}. However, our proof technique is different since our algorithm is based on Hyvärinen scores and not likelihood ratios.

\begin{lemma}
\label{lem:exist_lambda} (Existence of positive $\lambda_i$)
Define  $h(\lambda_i) \triangleq \mathbb{E}_{\infty}\left[ e^{\lambda_i \cdot \left( S_H(X_{i,n}, f_i) - S_H(X_{i,n}, g_i) \right)} \right]-1 $, we have either case (a) or case (b).
\begin{enumerate}
\item[(a)]  If $P_{\infty}\left\{ S_H(X_{i,n}, f_i) - S_H(X_{i,n}, g_i)\leq 0\right\} <1 $, then $\exists \lambda_i \in (0,  \infty ),$ s.t $\mathbb{E}_{\infty}\left[e^{\lambda_i \cdot \left( S_H(X_{i,n}, f_i) - S_H(X_{i,n}, g_i) \right)} \right] =1$
\item[(b)] If $P_{\infty}\left\{ S_H(X_{i,n}, f_i) - S_H(X_{i,n}, g_i) \leq 0\ \right\}  =1$, then $\forall \lambda_i \in (0,  \infty )$, $\mathbb{E}_{\infty}\left[e^{\lambda_i \cdot \left( S_H(X_{i,n}, f_i) - S_H(X_{i,n}, g_i) \right)} \right] <1 $
\end{enumerate}
\end{lemma}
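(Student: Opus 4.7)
Let me fix $i \in \mathcal{I}$ and work with the random variable
$W \triangleq S_H(X_{i,n}, f_i) - S_H(X_{i,n}, g_i)$
under $P_\infty$, so that $X_{i,n} \sim f_i$. By equation~(\ref{eq:D_F_P_infty_P_i}) in the excerpt,
$\mathbb{E}_\infty[W] = \mathbb{E}_\infty[U_i(X_{i,n})] = -D_F(f_i \parallel g_i)$,
and since the Hyv\"arinen score is strictly proper (and we are in the non-trivial case $f_i \neq g_i$), this expectation is strictly negative. So the setup reduces to studying $h(\lambda) = \mathbb{E}_\infty[e^{\lambda W}] - 1$ where $W$ is a random variable with $\mathbb{E}_\infty[W] < 0$.

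\textbf{Case (b).} Under the hypothesis $P_\infty(W \leq 0) = 1$, the fact that $\mathbb{E}_\infty[W] < 0$ forces $P_\infty(W < 0) > 0$ (otherwise $W = 0$ a.s., contradicting $\mathbb{E}_\infty[W]<0$). For every $\lambda > 0$, the function $w \mapsto e^{\lambda w}$ is strictly less than $1$ on $\{w < 0\}$ and equal to $1$ on $\{w=0\}$. Taking expectations yields $\mathbb{E}_\infty[e^{\lambda W}] < 1$, i.e., $h(\lambda) < 0$ for all $\lambda > 0$, which is exactly (b).

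\textbf{Case (a).} Under $P_\infty(W > 0) > 0$, I would apply an intermediate-value argument to the MGF $M(\lambda) = \mathbb{E}_\infty[e^{\lambda W}]$ on its effective domain $\mathcal{D} = \{\lambda \geq 0 : M(\lambda) < \infty\}$. The needed facts are: (i) $M$ is convex on $\mathcal{D}$ (H\"older, or direct from convexity of $\lambda \mapsto e^{\lambda w}$), hence continuous on the interior; (ii) $M(0) = 1$ and, by dominated convergence near $0$, $M'(0) = \mathbb{E}_\infty[W] < 0$, so $M$ strictly dips below $1$ for small $\lambda>0$; (iii) because $P_\infty(W > 0) > 0$, there exists $\varepsilon > 0$ with $P_\infty(W \geq \varepsilon) > 0$, and on this event $e^{\lambda W} \geq e^{\lambda \varepsilon} \to \infty$. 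Fatou's lemma then gives $\liminf_{\lambda \uparrow \sup \mathcal{D}} M(\lambda) = +\infty$. Combining (ii) and (iii) with the continuity in (i), the intermediate value theorem yields some $\lambda_i \in (0, \sup \mathcal{D})$ with $M(\lambda_i) = 1$, i.e., $h(\lambda_i) = 0$.

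\textbf{Main obstacle.} The genuine technical point is the behavior of $M$ on the boundary of $\mathcal{D}$: if $\sup \mathcal{D} = \lambda_{\max} < \infty$, one must argue that $M$ climbs past $1$ strictly inside $\mathcal{D}$ rather than only in the limit. The cleanest way is to exhibit, via Fatou as above, some $\lambda' \in \mathcal{D}$ with $M(\lambda') > 1$; this combined with $M(\lambda)<1$ for small $\lambda>0$ and continuity of $M$ on $(0,\lambda')$ delivers the crossing. A mild integrability remark (e.g., that $M$ is finite on a right-neighborhood of $0$, which follows once $W$ has a finite MGF in a neighborhood, a standing mild regularity assumption) ensures the argument is not vacuous. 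Once these ingredients are in place, the lemma follows with no further computation.
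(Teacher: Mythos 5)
The paper offers no self-contained proof of this lemma (it simply cites Lemma~2 of the SCUSUM reference), so reconstructing an argument is the right thing to do, and most of yours is sound. Case (b) is correct: strict properness gives $\mathbb{E}_\infty[W]=-D_F(f_i\parallel g_i)<0$ for $f_i\neq g_i$, hence $P_\infty(W<0)>0$, and $e^{\lambda W}\le 1$ a.s.\ with strict inequality on a set of positive measure yields $\mathbb{E}_\infty[e^{\lambda W}]<1$. Likewise, in case (a) the "dip below $1$ for small $\lambda>0$" step is fine once $M(\lambda)=\mathbb{E}_\infty[e^{\lambda W}]$ is finite on a right-neighborhood of $0$.

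The gap is in how you close case (a). Your claim that Fatou gives $\liminf_{\lambda\uparrow\sup\mathcal{D}}M(\lambda)=+\infty$ fails when $\sup\mathcal{D}=\lambda_{\max}<\infty$: splitting the expectation over $\{W>0\}$ and $\{W\le 0\}$, monotone convergence on the first piece and dominated convergence on the second give $M(\lambda)\to M(\lambda_{\max})$ as $\lambda\uparrow\lambda_{\max}$, and this limit can be finite and strictly below $1$. Concretely, take $W$ whose law has a small right tail with density proportional to $e^{-\lambda_{\max}w}w^{-3}$ on $[1,\infty)$ and an atom at a large negative value carrying most of the mass: then $P_\infty(W>0)>0$ and $\mathbb{E}_\infty[W]<0$, yet $M$ is finite exactly on $[0,\lambda_{\max}]$, convex, and remains below $1$ on all of $(0,\lambda_{\max}]$, while $M(\lambda)=+\infty$ for $\lambda>\lambda_{\max}$; no $\lambda_i\in(0,\infty)$ with $M(\lambda_i)=1$ exists. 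So the crossing cannot be deduced from $P_\infty(W>0)>0$ alone, and your "mild integrability remark" (finiteness of $M$ near $0$) only secures $M'(0)=\mathbb{E}_\infty[W]<0$, not the return to $1$. To make the intermediate-value argument work you must import a stronger tail condition — e.g., $\mathbb{E}_\infty[e^{\lambda W}]<\infty$ for all $\lambda>0$ (then your bound $M(\lambda)\ge e^{\lambda\varepsilon}P_\infty(W\ge\varepsilon)\to\infty$ does the job), or at least finiteness of $M$ at some point where it is $\ge 1$ — which is exactly the kind of regularity assumed in the cited Lemma~2 that this lemma implicitly inherits. As written, your case (a) is incomplete, and without such an assumption the statement itself is not provable.
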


\begin{proof}
 The proof can be found in \cite[ Lemma 2]{Wuetal_IT_2024}.
\end{proof}

\begin{theorem} \label{thm:FA}
For the min-SCUSUM stopping time $T(b)$ defined as in (\ref{eq:sigma_i(b)}) and (\ref{eq:sigma_b}), we have
\begin{equation}
    \mathbb{E}_{\infty} [T(b)] \geq \frac{e^b}{  |\mathcal{I}|}.  
\end{equation}
\end{theorem}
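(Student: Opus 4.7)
The plan is to adapt the classical exponential-submartingale proof of the single-stream SCUSUM false alarm lower bound in~\cite{Wuetal_IT_2024} to the multi-stream minimum stopping rule. The starting point is the observation that, for each channel $i$, the choice of $\lambda_i$ from Lemma~\ref{lem:exist_lambda} gives $\mathbb{E}_\infty[e^{Y_i(X_{i,n})}] = 1$. Combining this with the $(\cdot)^+$ reflection in the recursion~\eqref{eq:SCUSUM_recursion}, a case analysis on the sign of $Z_i(n-1)+Y_i(n)$ yields
$\mathbb{E}_\infty[e^{Z_i(n)} \mid \mathcal{F}_{n-1}] = e^{Z_i(n-1)} + A_i(n)$,
where the compensator $A_i(n) \triangleq \mathbb{E}_\infty[(1 - e^{Z_i(n-1)+Y_i(n)})^+ \mid \mathcal{F}_{n-1}]$ lies in $[0,1]$. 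Hence $e^{Z_i(n)}$ is a $P_\infty$-submartingale on each channel.

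Using the Doob decomposition, I would write $e^{Z_i(n)} = 1 + M_i(n) + \sum_{k=1}^n A_i(k)$, with $M_i$ a $P_\infty$-martingale starting at $0$, and apply the optional stopping theorem to $M_i$ at the bounded stopping time $T(b)\wedge N$. Together with the coarse bound $A_i(k)\le 1$, this gives $\mathbb{E}_\infty[e^{Z_i(T(b)\wedge N)}] \le 1 + \mathbb{E}_\infty[T(b)\wedge N]$ for every channel. Summing over $i\in\mathcal{I}$ produces the upper bound $\mathbb{E}_\infty\bigl[\sum_{i} e^{Z_i(T(b)\wedge N)}\bigr] \le |\mathcal{I}|\bigl(1 + \mathbb{E}_\infty[T(b)\wedge N]\bigr)$.

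To close the argument I would lower-bound the same expectation by using the definition of the min-SCUSUM stopping rule. On $\{T(b)\le N\}$, at least one stream satisfies $Z_{i^\star}(T(b))\ge b$ and every other $Z_i(T(b))\ge 0$, so $\sum_i e^{Z_i(T(b))} \ge e^b + (|\mathcal{I}|-1)$; on $\{T(b)>N\}$, the trivial bound $\sum_i e^{Z_i(N)}\ge |\mathcal{I}|$ holds. Combining this with the upper bound, cancelling the $|\mathcal{I}|$ terms, and letting $N\to\infty$ using monotone convergence (and the almost-sure finiteness of $T(b)$ under $P_\infty$, which follows from the Harris recurrence of each reflected random walk driven by the negative-drift increments $Y_i$), produces the stated lower bound on $\mathbb{E}_\infty[T(b)]$.

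The main technical obstacle is pinning down the exact constant $e^b/|\mathcal{I}|$: the naive estimate $A_i(n)\le 1$ is loose, and the clean chain of inequalities above tends to deliver $(e^b-1)/|\mathcal{I}|$ rather than $e^b/|\mathcal{I}|$ on the nose. To recover the sharp constant I would use the tighter estimate $A_i(n) \le P_\infty(Z_i(n)=0 \mid \mathcal{F}_{n-1})$, so that $\sum_{k=1}^n A_i(k)$ is controlled by the expected number of reflections of $Z_i$ at zero rather than by $n$ itself, or exploit the full overshoot at $T(b)$; either refinement closes the small gap. A secondary, largely standard obstacle is justifying the passage $N\to\infty$, which reduces to showing $P_\infty(T(b)<\infty)=1$ via the ergodicity of the per-channel reflected random walk.
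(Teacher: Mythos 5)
Your argument is correct in substance but takes a genuinely different route from the paper. The paper never works with the reflected statistics $e^{Z_i(n)}$ at all: it lower-bounds $T(b)$ by the stopping time $T'(b)$ of the summed Shiryaev--Roberts-type statistic $\sum_i A_i(n)$ with $A_i(n)=\sum_{k=1}^n\prod_{t=k}^n e^{Y_i(X_{i,t})}$ (replacing the maxima over channels and change points by sums), shows that $A_i(n)-n$ is a zero-mean $P_\infty$-martingale because $\mathbb{E}_\infty[e^{Y_i(X_{i,n})}]=1$, and applies optional sampling to $\sum_i (A_i(n)-n)$, which yields $e^b/|\mathcal{I}|$ immediately since $\sum_i A_i(T'(b))\ge e^b$ by definition of $T'$. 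You instead use the per-channel submartingale/Doob decomposition of $e^{Z_i(n)}$ with compensator increments $(1-e^{Z_i(n-1)+Y_i(n)})^+\le 1$ and then sum over channels; this is closer to the classical CUSUM false-alarm argument and, as you correctly flag, the crude compensator bound only gives $(e^b-1)/|\mathcal{I}|$. Your proposed refinement does close the gap: since $(1-e^{x})^+\le\mathbb{I}(x\le 0)$ and $\{Z_i(n-1)+Y_i(n)\le 0\}=\{Z_i(n)=0\}$, the compensator accumulated up to $T(b)\wedge N$ is dominated by the number of visits of $Z_i$ to zero, and on $\{T(b)\le N\}$ the triggering channel has $Z_{i^\star}(T(b))\ge b>0$, so the total count across channels is at most $|\mathcal{I}|(T(b)\wedge N)-\mathbb{I}(T(b)\le N)$; that single saved unit is exactly what upgrades $e^b-1$ to $e^b$ (the overshoot idea alone would not). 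Also, the finiteness issue you list as a secondary obstacle is not needed: if $P_\infty(T(b)=\infty)>0$ then $\mathbb{E}_\infty[T(b)]=\infty$ and the bound is trivial, and otherwise monotone convergence in $N$ suffices. In short, the paper's SR-martingale comparison is shorter and delivers the sharp constant with no reflection bookkeeping, while your route stays entirely at the level of the implemented CUSUM recursions and makes explicit where the slack (reflections at zero) resides.
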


\begin{proof}

We first prove some preliminary facts. 
For fixed $ i\in \mathcal{I}, i \neq j$, we define $G_i(n) \triangleq \prod\limits_{k=1}^n \frac{e^{Y_i(X_{i,k})} }{ \mathbb{E}_{\infty} \left[ e^{Y_i(X_{i,k})  } \right]  }$, and note that $ \{ G_i(n) \} $ is a  $P_{\infty}$-martingale with mean 1. This follows because 
\begin{align*}
\mathbb{E}_{\infty} [G_i(n+1) \mid \mathcal{F}_{n} ] &=   \mathbb{E}_{\infty} \left[G_i(n) \cdot \frac{e^{Y_i(X_{i,n+1})} }{ \mathbb{E}_{\infty} \left[ e^{Y_i(X_{i,n+1})  } \right]  }  \middle| \mathcal{F}_{n} \right]  \\
&= G_i(n) \cdot \mathbb{E}_{\infty} \left[   \frac{e^{Y_i(X_{i,n+1})} }{ \mathbb{E}_{\infty} \left[ e^{Y_i(X_{i,n+1}) } \right]  }      \right] \\
&= G_i(n),
\end{align*}
and
\begin{align*}
\mathbb{E}_{\infty} [G_i(n) ] &= \frac{\mathbb{E}_{\infty} \left[\prod\limits_{k=1}^n e^{Y_i(X_{i,k})} \right]   }{ \prod\limits_{k=1}^n \mathbb{E}_{\infty} \left[ e^{Y_i(X_{i,k})  } \right] } =1. 
\end{align*}
Next, for a fix $ i\in \mathcal{I}$, with 
$$ A_i(n) \triangleq \sum\limits_{k=1}^n \prod\limits_{t=k}^n e^{Y_i(X_{i,t})} ,$$
$\{A_i(n) -n \}$ is a $P_{\infty}$-martingale with mean 0. This follows because
\begin{align*}
A_i(n+1)& = \sum_{k=1}^{n} \prod_{t=k}^{n+1}  e^{Y_i(X_{i,t})}   + e^{Y_i(X_{i,n+1})}  \\
& = e^{Y_i(X_{i,n+1})} \left( \sum_{k=1}^{n} \prod_{t=k}^{n}  e^{Y_i(X_{i,t})} \right) + e^{Y_i(X_{i, n+1})}  \\
& = e^{Y_i(X_{i, n+1})}  \cdot (A_i(n) +1),    
\end{align*}
and 
\begin{align*}
& \mathbb{E}_{\infty}[A_i(n+1)-(n+1) | \mathcal{F}_n ] \\
 &= \mathbb{E}_{\infty}\left[e^{Y_i(X_{i, n+1})} \cdot \left(A_i(n) +1 \right) -(n+1) \middle|  \mathcal{F}_n \right] \\
&= (A_i(n) +1) \cdot \mathbb{E}_{\infty}\left[ e^{Y_i(X_{i, n+1})} \right] - (n+1)  \\ 
& = A_i(n)- n.   
\end{align*}
In addition, 
\begin{align*}
\mathbb{E}_{\infty}[A_i(n) -n] &=  \sum_{k=1}^{n}  \prod_{t=k}^n  \mathbb{E}_{\infty}\left[  e^{Y_i(X_{i,t})} \right] -n 
 = 0. 
\end{align*}
Above, we used the fact that $\mathbb{E}_{\infty}\left[e^{Y_i(X_{i,n+1})} \right] =1$, i.e., there exists $\lambda>0$ for which this equation holds. See proof in Appendix \ref{lem:exist_lambda}. 

To lower bound $ \mathbb{E}_{\infty} [T(b)]$, we note that 
$$
\begin{aligned}
T(b) &= \inf \left\{n\geq 1: \max_{i \in \mathcal{I}}   Z_i(n) \geq b \right\}     \\
&= \inf \left\{n\geq 1: \max_{i \in \mathcal{I}}     \max_{1 \leq k \leq n}  \prod_{t=k}^n e^{Y_i(X_{i,t})}     \geq e^b \right\}  \\
& \geq \inf \left\{n\geq 1: \sum_{i=1}^{|\mathcal{I}|}  \sum_{k=1}^n  \prod_{t=k}^n e^{Y_i(X_{i,t})} \geq e^b \right\} \\
& =  \inf \left\{n\geq 1: \sum_{i=1}^{|\mathcal{I}|}  A_i(n) \geq e^b \right\} \triangleq T'(b).
\end{aligned}       
$$
From the property of martingale, we know that the linear combination  $\sum\limits_i (A_i(n) - n)$ is also a $P_{\infty}$-martingale with mean 0. Using the optional sampling theorem, we have 
$$
\begin{aligned}
0& = \mathbb{E}_{\infty} \left[ \sum_{i=1}^{|\mathcal{I}|} (A_i(T'(b)) -T'(b) ) \right]  \\
&=  \mathbb{E}_{\infty} \left[\sum_{i=1}^{|\mathcal{I}|} A_i(T'(b))\right]-|\mathcal{I}| \cdot \mathbb{E}_{\infty} [T'(b)] \\
&\geq  e^b  - |\mathcal{I}| \cdot \mathbb{E}_{\infty} [T'(b)].
\end{aligned}   
$$
This gives
$$\mathbb{E}_{\infty} [T(b)] \geq \mathbb{E}_{\infty} [T'(b)] \geq\frac{e^b}{  |\mathcal{I}|}.  $$    
\end{proof}
Therefore, setting 
$$
b = \log \frac{|\mathcal{I}|}{\alpha}
$$
ensures that $\mathbb{E}_{\infty} [T(b)] \geq \frac{1}{\alpha}$.

\subsection{Delay Analysis}
In this section, we obtain an expression for the asymptotic delay of the min-SCUSUM algorithm.

\begin{theorem} \label{thm:Delay_Analysis}
The delay of $T(b)$ is bounded by
\begin{equation}
 \mathcal{J}_i[T(b)]  = \mathbb{E}_i[T(b)] \leq   \mathbb{E}_i[T_i(b)]  \sim \frac{b}{ \lambda_i I_i }, \quad b \to \infty.   
\end{equation}
When $b= \log \frac{|\mathcal{I}|}{\alpha}$ to satisfy the constraint on the mean time to a false alarm $\frac{1}{\alpha}$, then
\begin{equation}
 \mathcal{J}_i[T(b)] \sim \frac{\log  (|\mathcal{I} |/\alpha) }{ \lambda_i I_i},  \; \alpha \to 0. 
\end{equation}
\end{theorem}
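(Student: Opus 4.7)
The plan is to establish the chain $\mathcal{J}_i[T(b)] = \mathbb{E}_i[T(b)] \leq \mathbb{E}_i[T_i(b)] \sim b/(\lambda_i I_i)$ as $b \to \infty$, and then obtain the second statement by substituting $b = \log(|\mathcal{I}|/\alpha)$ from Theorem~\ref{thm:FA} and noting that $\log(|\mathcal{I}|/\alpha) \sim |\log \alpha|$ as $\alpha \to 0$.

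For the first equality, I would adapt the classical CUSUM worst-case Lorden reduction to the multi-stream setting. The key observation is that each channel-wise recursion $Z_j(n) = (Z_j(n-1) + Y_j(X_{j,n}))^+$ is pathwise nondecreasing in its initial value $Z_j(n-1)$. Hence, holding the realization of the post-$\nu$ observations fixed, the crossing time of $Z_j$ from any starting value $Z_j(\nu) \geq 0$ is pathwise no larger than the crossing time from $0$. Taking the minimum over channels preserves this monotonicity, so on $\{T(b) > \nu\}$ one has $\mathbb{E}_{\nu,i}[T(b) - \nu \mid \mathcal{F}_\nu, T(b) > \nu] \leq \mathbb{E}_i[T(b)]$, where the right-hand side corresponds to $\nu = 0$ with every statistic initialized at $0$. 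The reverse inequality is immediate by taking $\nu = 0$ (where $\mathcal{F}_0$ is trivial and $T(b) > 0$ holds automatically), yielding $\mathcal{J}_i[T(b)] = \mathbb{E}_i[T(b)]$.

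For the middle inequality, pathwise $T(b) = \min_{j \in \mathcal{I}} T_j(b) \leq T_i(b)$ gives $\mathbb{E}_i[T(b)] \leq \mathbb{E}_i[T_i(b)]$ upon taking expectations. To obtain the asymptotic rate of $\mathbb{E}_i[T_i(b)]$, I would invoke the single-stream SCUSUM delay analysis from \cite{Wuetal_IT_2024}. Under $P_i = P_{0,i}$, the observations on channel $i$ are i.i.d.\ from $g_i$ beginning at time $1$, so $Z_i(n)$ is exactly a single-stream SCUSUM statistic driven by increments of positive mean $\mathbb{E}_i[Y_i(X_{i,n})] = \lambda_i I_i$, using the identity $I_i = \mathbb{E}_i[U_i(X_{i,n})]$ established in \eqref{eq:D_F_P_infty_P_i}. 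The renewal-theoretic argument underlying \eqref{eq:scusum_add} then yields $\mathbb{E}_i[T_i(b)] \sim b/(\lambda_i I_i)$ as $b \to \infty$, which transfers the SCUSUM conclusion from an $|\log \alpha|$-indexed threshold to a general threshold $b$.

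The main obstacle is the Lorden reduction in Step~1. The subtlety is that under $P_{\nu,i}$ only channel $i$ undergoes a change at $\nu$, while the unaffected channels $j \neq i$ retain the law $f_j$ throughout and keep a negative drift; one must nevertheless verify that initializing every statistic at $0$ produces the genuine worst case, rather than some configuration that exploits a large $Z_j(\nu)$ on an unaffected channel. The pathwise monotonicity of the channel-wise SCUSUM recursion in its initial condition, combined with the fact that the minimum of monotone maps remains monotone, handles affected and unaffected channels uniformly and is what makes the reduction go through.
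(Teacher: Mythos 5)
Your proposal is correct and follows essentially the same route the paper intends: the paper omits the proof, stating that it "follows classical arguments and depends on the delay of the SCUSUM algorithm" of \cite{Wuetal_IT_2024}, and your argument is precisely that classical chain — the Lorden worst-case reduction via pathwise monotonicity of each $Z_j$ in its initial value (preserved under the minimum over channels), the trivial bound $T(b)\leq T_i(b)$, and the single-stream SCUSUM first-passage asymptotics with drift $\lambda_i I_i$, followed by substituting $b=\log(|\mathcal{I}|/\alpha)$. No discrepancy with the paper's (sketched) proof.
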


\begin{proof}
The proof is skipped as it follows classical arguments and depends on the delay of the SCUSUM algorithm \eqref{eq:scusum_add} derived in \cite{Wuetal_IT_2024}. 



 
\end{proof}

\subsection{Probability of False Isolation}
In this section, we obtain an upper bound on the probability of mis-identification for the min-SCUSUM algorithm. A similar bound for the min-CUSUM algorithm was obtained in \cite{warner2024worst}. The proof given in \cite{warner2024worst} has several intermediate steps. Many of them are also applicable to our algorithm, as both the min-CUSUM and the min-SCUSUM have a Markovian structure. However, there are certain lemmas in  \cite{warner2024worst} which must be adjusted to account for the fact that our algorithm is based on scores. Consequently, our bound is similar but not identical to the one obtained in \cite{warner2024worst}.

\begin{theorem}
\label{thm:prob_misisolation}
Fix $i, j \in \mathcal{I}, i \neq j, b>0$, and let $\nu \geq 0$. Let $\lambda_i$ be a positive multiplier satisfying  
$$
\mathbb{E}_{\infty}\left[ e^{\lambda_i \cdot \left( S_H(X_{i,n}, f_i) - S_H(X_{i,n}, g_i) \right)} \right] = 1.
$$
The probability of  mis-identifying channel $i$ when the true affected  channel is $j$, $i \neq j$,  given there is no false alarm, is bounded by 
\begin{equation}
    \begin{aligned}
P_{\nu, j} \left(   D=i \mid T(b) >\nu   \right)& \leq
P_{\nu, j}\left(   T_i(b) \leq T_j(b)  \mid T(b)>\nu\right) \\
&\leq  e^{- b} (1+b) \left(1 +\frac{1}{\lambda_{j} I_j } + \zeta_{ij}(b) \right),    
    \end{aligned}
\end{equation}
where  $\zeta_{ij}(b)$ 
is a function of $b$ with $\lim\limits_{b \rightarrow \infty} \zeta_{ij}(b) =0$.
\end{theorem}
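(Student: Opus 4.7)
The approach is to first dispatch the first inequality by a simple set inclusion, and then prove the second by combining the independence of the streams under $P_{\nu, j}$, an optional-stopping argument on the martingale $A_i(n) - n$ of Theorem~\ref{thm:FA} evaluated at the stopping time $T_i(b) \wedge T_j(b)$, and the delay estimate of Theorem~\ref{thm:Delay_Analysis}.

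For the first inequality, I would observe that on $\{D = i\}$ the diagnosis rule forces $Z_i(T(b)) = \max_{k \in \mathcal{I}} Z_k(T(b))$, and since at $T(b)$ at least one statistic crosses $b$, necessarily $Z_i(T(b)) \geq b$, so $T_i(b) \leq T(b) \leq T_j(b)$.

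For the second inequality, I would proceed as follows. First, note that under $P_{\nu, j}$ the stream $i$ (with $i \neq j$) is unaffected, so its law coincides with that under $P_{\infty}$; the martingales $G_i(n)$ and $A_i(n) - n$ constructed in the proof of Theorem~\ref{thm:FA} therefore remain valid under $P_{\nu,j}$. Setting $T^{*} \triangleq T_i(b) \wedge T_j(b)$, the elementary inequality $e^{Z_i(n)} \leq 1 + A_i(n)$ combined with $Z_i(T_i(b)) \geq b$ yields $A_i(T^{*}) \geq e^{b} - 1$ on $\{T_i(b) \leq T_j(b)\}$, while in general $A_i(T^{*}) \geq 0$. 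An application of optional stopping (integrability provided by $T^{*} \leq T_j(b)$ and Theorem~\ref{thm:Delay_Analysis}) then gives
\begin{equation*}
(e^{b} - 1)\, P_{\nu, j}\!\left(T_i(b) \leq T_j(b)\right) \leq \mathbb{E}_{\nu, j}[A_i(T^{*})] = \mathbb{E}_{\nu, j}[T^{*}] \leq \mathbb{E}_{\nu, j}[T_j(b)].
\end{equation*}
A strong Markov reduction at $\nu$ shows that $\mathbb{E}_{\nu, j}[T_j(b) - \nu \mid T(b) > \nu] \leq \mathbb{E}_j[T_j(b)]$, since, conditional on $T(b) > \nu$, the process restarts from sub-threshold states and starting from $0$ is the slowest case. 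Inserting Theorem~\ref{thm:Delay_Analysis} in a non-asymptotic form $\mathbb{E}_j[T_j(b)] \leq (b+1)/(\lambda_j I_j) + O(1)$ and dividing through by $P_{\nu, j}(T(b) > \nu)$, which tends to $1$ as $b \to \infty$, produces the stated bound $e^{-b}(1+b)\bigl(1 + 1/(\lambda_j I_j) + \zeta_{ij}(b)\bigr)$, with $\zeta_{ij}(b)$ absorbing all $o(1)$ corrections: the overshoot term in the SCUSUM delay estimate, the $1/(e^{b}-1)$ versus $e^{-b}$ discrepancy, and the conditional-versus-unconditional gap in $P_{\nu,j}(T(b)>\nu)$.

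The main obstacle I anticipate is securing the sharp prefactor $(1+b)(1 + 1/(\lambda_j I_j))$ rather than a looser $O(b\,e^{-b})$ bound. This requires a non-asymptotic refinement of the SCUSUM delay estimate whose overshoot is controlled by the distribution of the Hyv\"arinen-score difference $U_j$, not by a log-likelihood ratio, so the renewal-theoretic input from \cite{warner2024worst} cannot be imported verbatim and must be replaced by the finer score-based estimates developed in \cite{Wuetal_IT_2024}. A secondary hurdle is making the strong-Markov reduction of the $\{T(b) > \nu\}$-conditioning uniform in $\nu$ while keeping track of all $o(1)$ contributions inside $\zeta_{ij}(b)$; and one must verify that optional stopping on $A_i(n) - n$ at $T^{*}$ is justified uniformly in $b$, which is standard but not automatic.
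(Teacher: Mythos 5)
Your first inequality is fine (on $\{D=i\}$ the maximizing statistic must itself have crossed $b$, so $T_i(b)\le T(b)\le T_j(b)$), and your martingale ingredients are individually sound: $e^{Z_i(n)}\le 1+A_i(n)$, $A_i(n)-n$ remains a martingale under $P_{\nu,j}$ because stream $i$ is unaffected, and a non-asymptotic version of $\mathbb{E}_j[T_j(b)]\le b/(\lambda_j I_j)+\mathbb{E}_j[(U_j^+)^2]/I_j^2$ is indeed available (it is exactly \eqref{eq:E_j_sigma_j} in the paper). The gap is in how you reinstate the conditioning on $\{T(b)>\nu\}$. Your optional-stopping chain bounds the \emph{unconditional} probability: $(e^b-1)P_{\nu,j}(T_i\le T_j)\le \mathbb{E}_{\nu,j}[T^*]\le \mathbb{E}_{\nu,j}[T_j(b)]$. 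But $\mathbb{E}_{\nu,j}[T_j(b)]$ grows like $\nu$ (under $P_{\nu,j}$ the statistic $Z_j$ has negative drift up to time $\nu$, so $T_j(b)\gtrsim\nu$ with high probability), and indeed $P_{\nu,j}(T_i\le T_j)\to P_\infty(T_i\le T_j)$, a non-vanishing constant, as $\nu\to\infty$. So no bound of the form $e^{-b}(1+b)(\cdots)$ that is free of $\nu$ can emerge from this inequality. The patch you propose --- "dividing through by $P_{\nu,j}(T(b)>\nu)$, which tends to $1$ as $b\to\infty$" --- does not repair this: the theorem must hold for each fixed $b$ uniformly in $\nu$, and for fixed $b$ we have $P_{\nu,j}(T(b)>\nu)=P_\infty(T(b)>\nu)\to 0$ as $\nu\to\infty$, so dividing makes the bound worse, not better; the correction is not an $o(1)$-in-$b$ term that can be absorbed into $\zeta_{ij}(b)$. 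Likewise your "strong Markov reduction" $\mathbb{E}_{\nu,j}[T_j(b)-\nu\mid T(b)>\nu]\le \mathbb{E}_j[T_j(b)]$ is a statement about a conditional expectation, while your optional-stopping inequality involves the unconditional $\mathbb{E}_{\nu,j}[T_j(b)]$; the two do not splice together.

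The conditioning has to enter the argument structurally, not as an afterthought, and this is where the paper's proof differs essentially from yours. The paper writes $P_{\nu,j}(T_i>T_j\mid T>\nu)$ as a ratio of conditional expected passage times (its identity \eqref{eq:p_true_isolation}), lower-bounds $\mathbb{E}_{\nu,j}[T_i-\nu\mid T>\nu]$ by the restarted passage time $L_{ij}(Z_i(\nu)\wedge b;b)$ and its explicit minorant $l_{ij}$ (Lemma~\ref{lem:lowerbound L_ij}), and --- crucially --- controls the law of the restart point through the conditional bound $P_\infty(Z_i(\nu)\ge x\mid T(b)>\nu)\le e^{-x}$ (Lemma~\ref{lem:main_condition_holds}, via stochastic monotonicity of $Z_i$ and Pollak's theorem). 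This is the step that makes the final bound uniform in $\nu$, and it has no counterpart in your argument. Your computation does go through at $\nu=0$ (where it essentially reproduces the paper's bound $\mathbb{E}_j[T_j]/\mathbb{E}_j[T_i]$ with $\mathbb{E}_j[T_i]\ge e^b$), but for general $\nu$ you need machinery of the Lemma~\ref{lem:main_condition_holds} type to control $Z_i(\nu)$ conditioned on no alarm before $\nu$.
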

  
\begin{proof}
    We provide the proof in Appendix \ref{proof:prob_upperbound}.
\end{proof}

Therefore, the probability of mis-identification is upper bounded by
\begin{equation}
\begin{aligned}
 & \max_{j \in \mathcal{I}} \sup_{\nu \in \mathbb{N}} P_{\nu,j} \left(D \neq j \mid T(b) > \nu \right) \\
 &= \max_{j \in \mathcal{I}}   \sup_{\nu \in \mathbb{N}} \sum_{i \neq j} P_{\nu, j}\left(   D=i \mid T(b)>\nu\right) \\
 & \leq \max_{j \in \mathcal{I}} \sum_{i \neq j}  e^{- b} (1+b) \left(1 +\frac{1}{\lambda_{j} I_j } + \zeta_{ij}(b) \right).   
\end{aligned} \label{eq:final_upperbound}
\end{equation}


Recall that the threshold \( b \) is selected to satisfy two constraints: $\mathbb{E}_\infty[T(b)] \geq 1/\alpha$, and 
$$
\max\limits_{i \in \mathcal{I}} \sup\limits_{\nu \geq 0} P_{\nu,i}(D \neq i \mid T(b) > \nu) \leq \beta. 
$$
%
From the false alarm analysis, it is clear that to ensure the false alarm rate of $\alpha$, the threshold  $b$  must satisfy    
$$  
b \geq \log \frac{|\mathcal{I}|}{\alpha}.
$$ 
To ensure the probability of false identification is below \( \beta \), the threshold \( b \) must satisfy
\[
\max_{j \in \mathcal{I}} \sum_{i \neq j} e^{-b}(1 + b) \left(1 + \frac{1}{\lambda_j I_j} + \zeta_{ij}(b) \right) \leq \beta.
\]
With $b = \log \frac{|\mathcal{I}|}{\alpha}$, by equation~\eqref{eq:final_upperbound}, we have  $b \to \infty$, as  $\alpha \to 0$. Also, the false identification probability is bounded by
\begin{align*}
&\max_{j \in \mathcal{I}} \sup_{\nu \in \mathbb{N}} P_{\nu,j}(D \neq j \mid T(b) > \nu)  \\
&\leq \max_{j \in \mathcal{I}} \sum_{i \neq j} e^{-b}(1 + b) \left(1 + \frac{1}{\lambda_j I_j} + \zeta_{ij}(b) \right) \\
&\leq \max_{j \in \mathcal{I}} \; C \cdot \frac{\alpha}{|\mathcal{I}|} \left(1 + \log \frac{|\mathcal{I}|}{\alpha} \right) \longrightarrow 0, \quad \text{as } \alpha \to 0,
\end{align*}
where \( C > 0 \) is a constant.
This demonstrates that as \( \alpha \to 0 \), this choice of \( b \) drives the mis-identification probability to zero. Therefore, choosing \( b = \log \frac{|\mathcal{I}|}{\alpha} \) provides a unified threshold that simultaneously controls both false alarms and false identifications.

\section{Numerical Results}
\label{sec:numerical}
In this section, we present numerical experiments that validate the theoretical properties of the proposed min-SCUSUM algorithm. We consider two complementary settings: (i) a synthetic multi-stream scenario using high-dimensional data generated by a Gauss-Bernoulli Restricted Boltzmann Machine (GB-RBM), and (ii) a real-world scenario involving video frames captured from a public area in Dublin, Ireland, from the website EarchCam.com \cite{earthcam2024}. 

\subsection{Validating the Upper Bound on the Probability of Misidentification Using Synthesized Data}
This experiment aims to numerically validate the theoretical upper bound on the probability of fault misidentification derived in Theorem~\ref{thm:prob_misisolation}. We simulate a controlled multi-stream environment where only one channel undergoes a distributional change, while the others remain the same.

We present the simulation results for the single-fault multichannel problem described in Section~\ref{sec:problem_formulation}, where there are three candidate channels, and the real change occurs in channel 1, i.e.,  $I = \{1,2, 3\}$ and $j=1$. All the data streams are sampled from a Gauss-Bernoulli Restricted Boltzmann Machine (GB-RBM). The Restricted Boltzmann Machine (RBM) \cite{liao2022gaussian, lecun2006tutorial} is a generative graphical model based on a bipartite structure comprising a layer of "hidden variables" and a layer of "visible variables". 

\begin{figure}[h]
    \centering
    \includegraphics[scale=0.44]{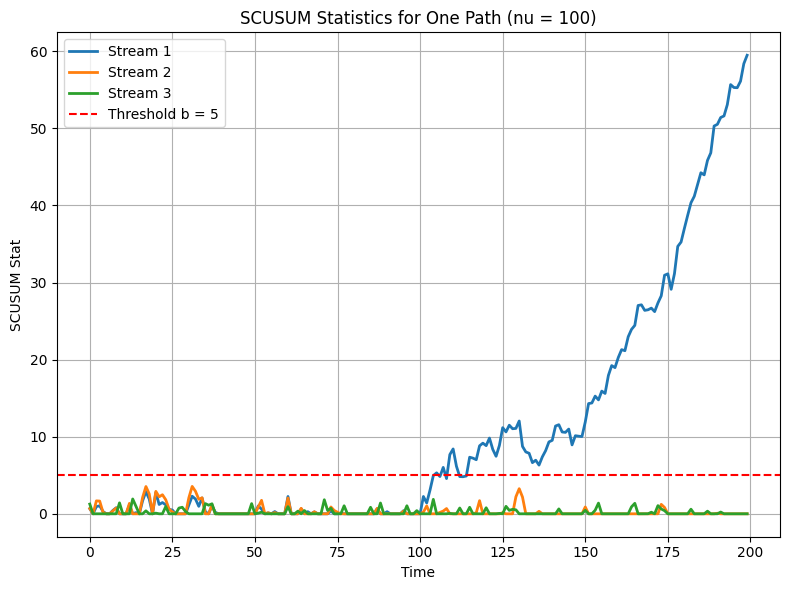}
    \caption{The evolution of the SCUSUM statistics for the three streams.}
    \label{fig:SCUSUM_Stat}
\end{figure}

The GB-RBM consists of:  
\begin{itemize}
    \item Binary-valued hidden variables:    $ H = (h_1, \dots, h_{d_h})^T \in \{0, 1\}^{d_h}$,
    \item Real-valued visible variables:   $  V = (v_1, \dots, v_{d_x})^T \in \mathbb{R}^{d_x}$, and 
    \item Weight matrix representing interactions between hidden and visible units:    $W = (W_{ij}) \in \mathbb{R}^{d_x \times d_h}$.
\end{itemize}
In our numerical experiment, we initialize the GB-RBM parameters as follows:
\begin{itemize}
\item Weight matrix:  
  \[
  W_0 \in \mathbb{R}^{10 \times 5}, \quad \text{with each entry is i.i.d. } \sim \mathcal{N}(0,1). 
  \]
    \item Bias vectors for visible and hidden layers:
    \[
    v_0 \in \mathbb{R}^{10}, \quad \text{with each entry is i.i.d. } \sim \mathcal{N}(0,1), 
    \]
    \[
    h_0 \in \mathbb{R}^{5}, \quad \text{with each entry is i.i.d. } \sim \mathcal{N}(0,1). 
    \]
\end{itemize}
After initializing these parameters, we apply Gibbs sampling to generate independent and identically distributed (i.i.d.) data.

The pre-change data stream is generated from the distribution: 
\begin{align*}
  & f_1 \sim \text{RBM}(W, v, h)\\
  & f_2 \sim \text{RBM}(W+0.2, v, h)\\
  & f_3 \sim \text{RBM}(W+0.1, v, h),
\end{align*}
where the parameters are set as \( W = W_0 \), \( v = v_0 \), and \( h = h_0 \), all sampled independently from a standard normal distribution.
To model the post-change distributions, we introduce structured perturbations to the weight matrix while keeping \( v \) and \( h \) unchanged:
\begin{align*}
& g_1 \sim \text{RBM}(W - 0.1, v, h), \\
& g_2 \sim \text{RBM}(W + 0.1, v, h), \\
&  g_3 \sim \text{RBM}(W + 0.2, v, h).  
\end{align*}
These perturbations induce controlled shifts in the data distributions while preserving their high-dimensional structure.

For change points  $\nu = \{ 0, 20, 100\} $, we simulate 10,000 independent sample paths of the SCUSUM statistics. 
To illustrate the behavior of the SCUSUM statistics over time, Figure~\ref{fig:SCUSUM_Stat} shows one sample path of the SCUSUM statistics for the three monitored channels, with stream 1 as the truly affected stream. As expected, its SCUSUM statistic exhibits a clear upward trend after the change point, while the statistics for the unaffected channels (Monitors 2 and 3) remain near zero. If we control the false alarm rate to be $\alpha = 0.02$, we then set $b = \log (d/ \alpha) = 5$, with expected delay \(  = \frac{\log (d /\alpha)}{ \lambda_1 D_F(g_1||f_1 )} = 6.4\) in stream 1. 
\begin{figure}
    \centering
    \includegraphics[scale=0.42]{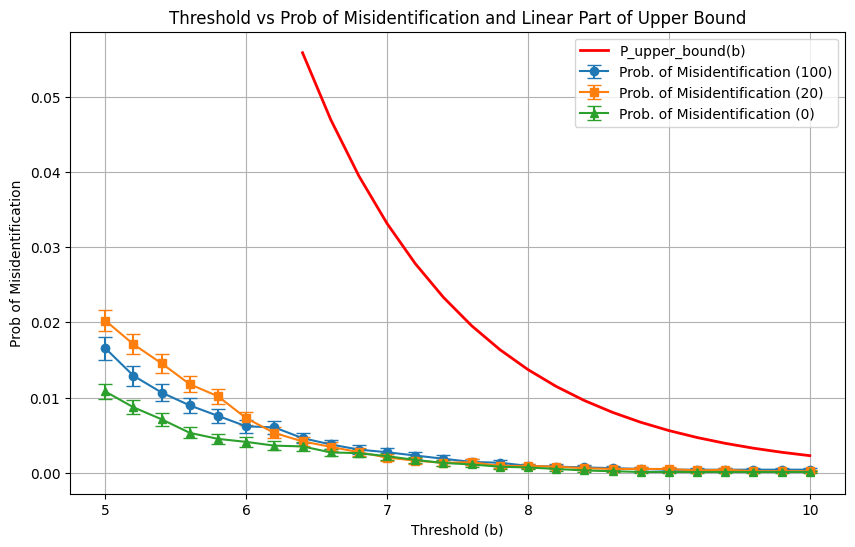}
    \caption{The plots of the probability of misidentification for three possible change points $\nu=0,20,100$ as a function of the stopping threshold $b$. }
    \label{fig:upper_bound}
\end{figure}
 Figure~\ref{fig:upper_bound} shows that for each change point  $\nu \in \{0, 20, 100\}$, the conditional probability of fault misidentification decreases as the decision threshold $b$  increases. This behavior aligns with the theoretical expectation that a higher threshold reduces the likelihood of selecting an incorrect stream. The red curve in the figure corresponds to the first-order upper bound approximation  $C(1 + b)e^{-b} $, highlighting that the empirical misidentification probabilities consistently fall below this analytical bound.


\subsection{Detection on Real-World Video Streams}

To further demonstrate the practical applicability of our method, we apply the min-SCUSUM algorithm to real video data captured from the temple bar area of Dublin, Ireland. These videos are obtained from publicly available live streams through EarthCam.com \cite{earthcam2024}. The goal is to identify abrupt changes in visual activity that correspond to unusual events.

\begin{figure}[h]
    \centering
    \subfloat[Pre 1]{\includegraphics[width=0.1\textwidth]{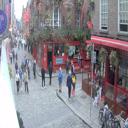}}
    \hspace{0mm}
    \subfloat[Pre 2]{\includegraphics[width=0.1\textwidth]{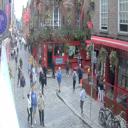}}
    \hspace{0mm}
    \subfloat[Pre 3]{\includegraphics[width=0.1\textwidth]{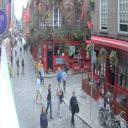}}
    \hspace{0mm}
    \subfloat[Pre 4]{\includegraphics[width=0.1\textwidth]{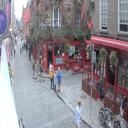}}
    \hspace{2mm}
    \subfloat[Post 1]{\includegraphics[width=0.1\textwidth]{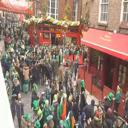}}
    \hspace{0mm}
    \subfloat[Post 2]{\includegraphics[width=0.1\textwidth]{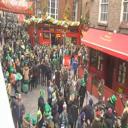}}
    \hspace{0mm}
    \subfloat[Post 3]{\includegraphics[width=0.1\textwidth]{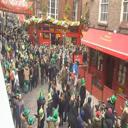}}
    \hspace{0mm}
    \subfloat[Post 4]{\includegraphics[width=0.1\textwidth]{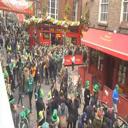}}
    \caption{Sample video frames before and after change.}
    \label{fig:video_frames_excerpts}
\end{figure}

\subsubsection{Data Preprocessing} 
We collect video segments from multiple dates and events, including normal weekdays, busy pub nights, and St. Patrick’s Day celebrations. Representative frames are extracted at regular intervals and resized to \(64 \times 64\) pixels to facilitate neural network training. Figure~\ref{fig:video_frames_excerpts} presents a sample excerpt of video frames before and after a distributional shift. Frames (a)–(d) correspond to the pre-change period, capturing typical daytime pedestrian activity, while frames (e)–(h) are taken from the post-change period, reflecting a significantly altered scene during a public event. This visual comparison illustrates the nature of the distributional change targeted by the detection algorithm.

\subsubsection{Modeling Setup} 
We treat each image frame as a high-dimensional observation from a stream. 
We create three artificial streams. 
Frames from normal daylight hours serve as the pre-change distribution \( f=f_i \) for each stream, while different event types are modeled as distinct post-change distributions for each of the three streams:
\begin{itemize}
    \item Music pub night frames are used to generate post-change samples for Stream 1 ($g_1$). 
    \item St. Patrick’s Day frames are used to generate post-change samples for Stream 2 ($g_2$).
    \item Altercation or crowd disturbance frames are used to generate post-change samples for Stream 3 ($g_3$).
\end{itemize}
We train a U-Net-based score network \cite{ronneberger2015u, song2019generative} to estimate the Hyvärinen scores of these distributions using images from each setting. We conducted three different experiments, with the change affecting a single stream in each case. For example, in the experiment where Stream 1 is affected, its post-change law is switched to $g_1$ at the time of change, while the laws of the other streams are maintained at $f$. 

Below, we only report results for the experiment where the change occurs in Stream 1. The results for the other experiments are similar and are not reported in this paper.

\subsubsection{Results} 
For the chosen experiment, we compute the SCUSUM statistics over time for each stream. 
Figure~\ref{fig:z_k_3_streams} shows the SCUSUM statistics computed across three streams when the change occurs in channel 1 at time index 1000. The first half of the data corresponds to normal daytime activity, while the second half represents the post-change scenario—here, a music pub night. As expected, the SCUSUM statistic for the affected stream (SCUSUM 1) exhibits a sharp upward trend after the change point, while the statistics for the unaffected streams (SCUSUM 2 and SCUSUM 3) remain near zero throughout. Figure~\ref{fig:drift_behavior} plots the instantaneous SCUSUM increments, given by the difference of Hyvärinen scores \( S_H(X_k, f_i) - S_H(X_k, g_i) \), for each stream. The affected stream (channel 1) shows a clear shift from negative to positive drift at the change point (indexed at 1000), consistent with the expected behavior under the min-SCUSUM algorithm. The unaffected streams maintain steady negative drift throughout, further validating the robustness of the method in isolating the true change.

\begin{figure}
    \centering
    \includegraphics[width=0.85\linewidth] {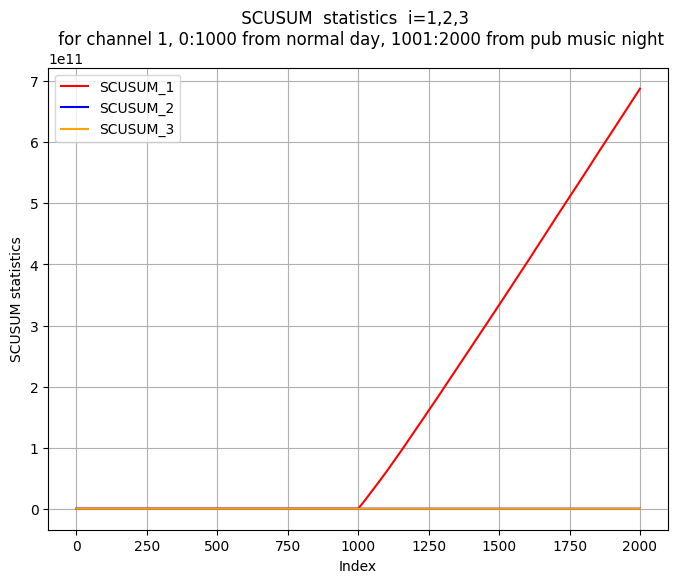}
    \caption{SCUSUM statistics across three streams.}
    \label{fig:z_k_3_streams}
\end{figure}

\begin{figure}
    \centering
    \includegraphics[width=0.85\linewidth]{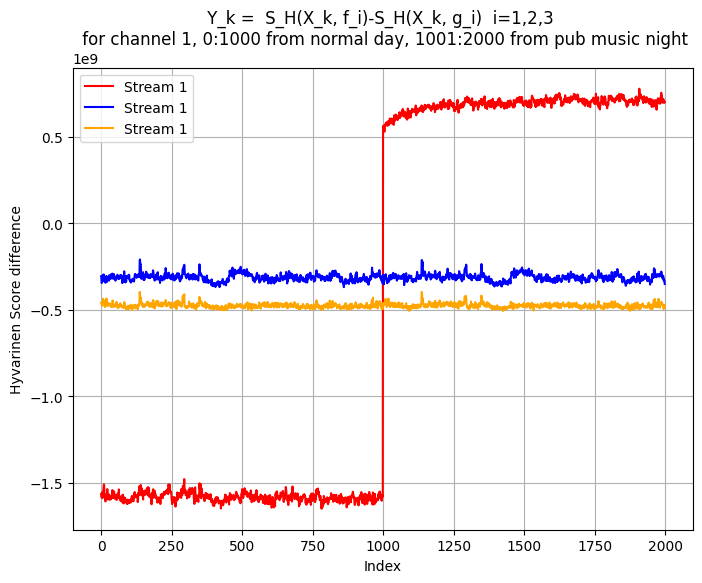}
    \caption{Negative drift before change, positive drift after.}
    \label{fig:drift_behavior}
\end{figure}


\appendices



\section{Some Properties of the Min-SCUSUM Stopping Rule}
\label{sec:properties}

In this appendix, we discuss some notation and properties of the min-SCUSUM algorithm. 
Recall from \eqref{eq:sigma_i(b)} that $T_i(b)$ is the stopping time of the $i$-th channel with threshold $b$. Suppose the change occurs at $\nu= 0$ in the stream $j$. Then the drift of the $i$th stream for $j \neq i$ will still be negative after the change. Consequently, using \eqref{eq:scusum_far} or Thereom~\ref{thm:FA}, we can lower bound the stopping time of the $i$-th channel $T_i(b)$ under measure $P_j$ as 
\begin{equation}
     \mathbb{E}_j[T_i(b)] \geq e^{ b}.  \label{eq:E_j_sigma_i}
\end{equation}



To bound $\mathbb{E}_j[T_j(b)]$, we introduce several auxiliary quantities.
For the $j$-th channel, we define  the SCUSUM random walk as, 
\begin{equation}
  V_j(n) \triangleq \sum_{k=1}^n Y_j(X_{j,k}) .   
\end{equation}
We define overshoot of the random walk $R_j$ as the function of  stopping threshold $b$ and stopping time $T_j(b)$:
\begin{equation}
R_j(T_j(b)) \triangleq V_j(T_j(b)) - b .  \label{eq:remainder}
\end{equation}
From \cite{lorden1970excess}, we get
\begin{align}
& \sup_{b \geq 0} \mathbb{E}_j[ R_j(T_j(b)) ] 
\leq \frac{\mathbb{E}_j\left[ Y_j (X_{j,n})^+ \right]^2}{\mathbb{E}_j[ Y_j(X_{j,n}) ]} \nonumber \\
&= \frac{
\mathbb{E}_j \left[ \lambda_j \cdot \left( S_H(X_{j,n}, f_j) - S_H(X_{j,n}, g_j) \right)^+ \right]^2
}{
\lambda_j \cdot I_j
}.
\label{eq:lorder_excess_bound}
\end{align}
Substituting (\ref{eq:remainder}) into (\ref{eq:lorder_excess_bound}) and using Wald's identity, we then obtain an upper bound on the stopping time $T_j(b)$:
\begin{align*}
\mathbb{E}_j[ R_j(T_j(b)) ] &= \mathbb{E}_j[Y_j(X_{j,1})] \cdot \mathbb{E}_j[T_j(b)]  \\
& \leq b + \frac{\lambda_j \mathbb{E}_j[ U_j^+ ]^2 }{I_j}.
\end{align*}
This gives us
\begin{equation}
\mathbb{E}_j[T_j(b)] \leq \frac{b}{\lambda_j {I_j}} + \frac{\mathbb{E}_j\left[ ( U_j^+ )^2 \right]}{  I_j^2 } . \label{eq:E_j_sigma_j}
\end{equation}


Next, we establish lower bounds for the scenario where $ Z_i$ starts from a value $x$ within the interval $[0, b]$. We define $L_{ij}(x; b)$ as the expected value of $T_i(b)$ under measure $P_j$, where $i \neq j$,  given that $Z_i $ is initialized at some $x \in [0, b]$, i.e.,
\begin{equation}
L_{ij}(x; b) \triangleq \mathbb{E}_j \left[ T_i(b) \mid Z_i(0) = x \right], \quad x \in [0, b], i \neq j.  \label{eq:def_L_ij}
\end{equation}
If there is no ambiguity, we write $ L_{ij}(0; b) = \mathbb{E}_j[T_i(b)]$.
Clearly, \( L( x ; b) \) is a non-increasing w.r.t $x$, i.e., 
\begin{equation}
   L_{ij}(0; b) \geq   L_{ij}(x; b) . 
   \label{eq:monotone_Lij}
\end{equation}

\begin{lemma}
\label{lem:lowerbound L_ij} 
We have the following bound:
$$
L_{ij}(x; b)  \geq l_{ij}(x; b), \quad  \forall  x \in [0, b],
$$
where 
\begin{equation}
l_{ij}(x; b) \triangleq \frac{x -e^{-(b-x)}(\omega_i + b) }{ \lambda_i \cdot D_F(f_i ||g_i)} + \left( 1-e^{-(b-x)} \right) \cdot L_{ij}(0;b).
\end{equation}

\end{lemma}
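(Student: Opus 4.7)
The plan is to set up a one-epoch renewal decomposition for the reflected SCUSUM statistic $Z_i$. Let $\tau := \inf\{n \geq 1 : Z_i(n) = 0 \text{ or } Z_i(n) \geq b\}$ when $Z_i(0) = x \in [0, b]$, and let $A := \{Z_i(\tau) \geq b\}$ and $B := \{Z_i(\tau) = 0\}$ partition the outcome at $\tau$. By the i.i.d.\ observations and the Markovian recursion \eqref{eq:SCUSUM_recursion}, on $B$ the process restarts from $0$ with an independent fresh copy of $T_i(b)$ under $P_j$, which yields the identity
\begin{equation*}
L_{ij}(x; b) \;=\; \mathbb{E}_j[\tau] \;+\; P_j(B)\, L_{ij}(0; b).
\end{equation*}
It therefore suffices to lower-bound the two pieces on the right-hand side.

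For $P_j(A)$, I would exploit the exponential martingale. Because channel $i$ is unaffected under $P_j$ (since $i \neq j$), $X_{i,n} \sim f_i$ and the defining choice of $\lambda_i$ from Lemma~\ref{lem:exist_lambda} gives $\mathbb{E}_j[e^{Y_i(X_{i,n})}] = 1$. Introducing the unreflected walk $V_i(n) := x + \sum_{k=1}^n Y_i(X_{i,k})$, we have $V_i(n) = Z_i(n)$ for every $n \leq \tau$ (the two exit times coincide because before $\tau$ the statistic stays strictly positive), and $M_n := e^{V_i(n)}$ is a $P_j$-martingale of mean $e^x$. Optional stopping at $\tau$ together with $V_i(\tau) \geq b$ on $A$ and $V_i(\tau) \leq 0$ on $B$ gives $e^x \geq e^b P_j(A)$, whence $P_j(A) \leq e^{-(b-x)}$ and $P_j(B) \geq 1 - e^{-(b-x)}$. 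For $\mathbb{E}_j[\tau]$, apply Wald's identity: since $\mathbb{E}_j[Y_i(X_{i,1})] = -\lambda_i D_F(f_i \| g_i)$ by \eqref{eq:D_F_P_infty_P_i},
\begin{equation*}
\mathbb{E}_j[\tau] \;=\; \frac{x - \mathbb{E}_j[V_i(\tau)]}{\lambda_i\, D_F(f_i \| g_i)}.
\end{equation*}
Writing $V_i(\tau) = b + R$ on $A$ with overshoot $R \geq 0$ and $V_i(\tau) \leq 0$ on $B$, a conditional overshoot bound $\mathbb{E}_j[R \mid A] \leq \omega_i$ gives $\mathbb{E}_j[V_i(\tau)] \leq (b + \omega_i)\, e^{-(b-x)}$, and substituting back into the renewal identity reproduces $l_{ij}(x; b)$ exactly.

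The main obstacle is the control of the overshoot constant $\omega_i$: under $P_j$ the walk on channel $i$ has strictly negative drift, so $A$ is a large-deviations event and the conditional law of $R$ on $A$ is nontrivial. I would address this via an exponential change of measure, using the tilt $dP_j^\ast/dP_j \propto e^{Y_i(X_{i,n})}$, which (because $\mathbb{E}_j[e^{Y_i}] = 1$) defines a probability measure under which the walk has positive drift and a Lorden-type excess bound analogous to \eqref{eq:lorder_excess_bound} applies; pushing this estimate back to $P_j$ yields the required $\omega_i$. The remaining technical checks, namely $\mathbb{E}_j[\tau] < \infty$ (a geometric-trials argument using the negative drift on $[0,b]$) and uniform integrability of $\{M_{n \wedge \tau}\}$ (via the bound $M_{n \wedge \tau} \leq e^{b} \cdot e^{Y_i^+}$ and the exponential integrability already used to define $\lambda_i$), are routine by comparison and support the two applications of optional stopping above.
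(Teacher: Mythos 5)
Your proposal is correct and takes essentially the same route as the paper's proof: the same one-epoch renewal identity $L_{ij}(x;b)=\mathbb{E}_j[\sigma]+P_j(B)\,L_{ij}(0;b)$, the same exponential-martingale bound $P_j(A)\le e^{-(b-x)}$ (the paper invokes the maximal/supermartingale inequality for the mean-one nonnegative martingale rather than optional stopping with uniform integrability), Wald's identity with drift $-\lambda_i D_F(f_i\|g_i)$, and the overshoot constant $\omega_i$, yielding exactly $l_{ij}(x;b)$. The only difference is how the overshoot bound is justified: the paper defines $\omega_i \triangleq \sup_{t>0}\mathbb{E}_\infty[Y_i(X_{i,1})-t\mid Y_i(X_{i,1})\ge t]$ and cites standard renewal-theory results for its finiteness and for its role as a bound on the walk's conditional overshoot, whereas you sketch an exponential change of measure plus a Lorden-type excess bound, which serves the same purpose.
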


\begin{proof}
We have defined  $V_i(n)$ as
$$
V_i(n) \triangleq \sum_{k=1}^n Y_i (X_{i,k}) =  \sum_{k=1}^n \lambda_i [S_H(X_{i,k}, f_i) -S_H(X_{i,k}, g_i) ] . 
$$
We denote  $(\sigma_i(x), d_i(x))$  as the stopping time and corresponding decision rule for distinguishing between  $f_i$  and $ g_i$ , when the detection statistic  $Z_i$  is initialized at  $x \in [0, b]$. We monitor the statistics until it reaches either the lower boundary at 0 or the upper boundary at  $b$. This construction follows the sequential testing framework described in \cite[Chapter 3]{tartakovsky2014sequential}. Equivalently, the procedure can be viewed in terms of the random walk $V_i(n) $ initialized at 0, which is stopped when it exits the interval  $(-x, b - x)$, i.e.,
\[
\begin{aligned}
  \sigma_i(x) & = \inf \{ n \in \mathbb{N} : V_i(n) \notin (-x, b - x) \}, \\
  d_i(x) & =
\begin{cases}
0, & \text{if } V_i(\sigma_i(x)) \leq -x, \\
1, & \text{if } V_i(\sigma_i(x)) \geq b - x.
\end{cases}
\end{aligned}
\]
In both cases, the process terminates upon reaching the upper boundary and restarts upon hitting the lower one.

Define $p(x) \triangleq P_j(d_i(x) = 1). $  By the definition of $L_{ij}(x; b),  x \in [0, b]$, we get
\begin{align}
  &  L_{ij}(x; b) \triangleq \mathbb{E}_j[T_i(b) \mid Z_i(0) = x ] \nonumber\\
   & = p(x)\mathbb{E}_j[ \sigma_i(x) \mid d_i(x) = 1]  \nonumber\\
   & \quad + (1-p(x)) (\mathbb{E}_j[ \sigma_i(x)\mid d_i(x) = 0]  + L_{ij}(0;b) )  \nonumber \\
   & = \mathbb{E}_j[\sigma_i(x)] + (1 - p(x)) \cdot L_{ij}(0; b). \label{eq:L_ij_inparts}
\end{align}    

So, to lower bound $L_{ij}(x; b) $, we just need to lower bound  $1- p(x) $ and   $\mathbb{E}_j[\sigma_i(x)]$, separately. 

We now upper bound $p(x)$. Easy to check  that  $ \{ \exp\left( \sum_{k=1}^n Y_i(X_{i,k}) \right) \}_{n\in \mathbb{N}} $
is a  martingale under $P_j$ with expectation 1.  By the definition of  $p(x) $ and the supermartingale inequality, we get 
\begin{equation}
\begin{aligned}
& p(x) = P_j(V_{i}(\sigma_i(x)) \geq b-x)  \\ 
&\leq P_j \left( \sup_{n \geq 1} e^{V_{i}(n)}  \geq  e^{b-x} \right) \leq e^{-(b-x)}. \label{eq:p_x}
\end{aligned}     
\end{equation}

Next, we derive a lower bound for  $\mathbb{E}_j[\sigma_i(x)]$. Following a similar approach to that in \cite[Section 3.1.2]{tartakovsky2014sequential}, we can modify the bound for $\mathbb{E}_j[V_{i}(\sigma_i(x))]$  to obtain a corresponding bound for  $\mathbb{E}_j[\sigma_i(x)]$. By Wald's identity and (\ref{eq:D_F_P_infty_P_i}), we have
\begin{align*}
\mathbb{E}_j\bigg[ V_{i}(\sigma_i(x)) \bigg] & = \mathbb{E}_j \left[ \sum_{k=1}^{\sigma_i(x)}  \lambda_{i} U_i(X_{i,k}) \right]  \\
&= \mathbb{E}_j[ \sigma_i(x) ] \cdot \mathbb{E}_j[ \lambda_{i} U_i(X_{i,1}) ] \\
& =  - \lambda_i \mathbb{E}_j[ \sigma_i(x) ] \cdot D_F(f_i ||g_i),
\end{align*}
where we used the fact that $\mathbb{E}_j[ U_i(X_{i,1})] = -D_F(f_i ||g_i)$. 
Therefore, we have 
\begin{equation}
 \mathbb{E}_j[ \sigma_i(x) ]  = \frac{ - \mathbb{E}_j[ V_i(\sigma_i(x)) ]  }{  \lambda_{i} \cdot D_F(f_i ||g_i)  }.   \label{eq:E_j_sigma_i_expand}
\end{equation}

We define 
\begin{equation}
 \omega_i \triangleq \sup\limits_{t> 0}\mathbb{E}_\infty[ Y_{i}(X_{i,1}) -t \mid Y_{i}(X_{i,1}) \geq t].   
\end{equation}
The quantity $\omega_{i}$ represents the supremum of the expected overshoot of $Y_{i}(X_{i,1})$ above a threshold $t$ at the first time it crosses that threshold from below. It serves as an upper bound for the expected overshoot of the random walk $V_i$ under the measure $P_j$ above any non-negative threshold $t$. It can be shown that $\omega_{i}$ is finite \cite[Chapter 3]{tartakovsky2014sequential}.
Also, we notice that 
\begin{align*}
 & \mathbb{E}_j[ V_i(\sigma_i(x)) \mid V_i(\sigma_i) \geq b-x]  \quad \forall  x \in [0,b]   \\
 & = \mathbb{E}_j \bigg[ V_i(\sigma_i(x)) -(b-x) \mid V_i(\sigma_i) \geq b-x \bigg] +(b-x)  \\
 & \leq  \sup_{t\geq 0}\mathbb{E}_j[ V_i(\sigma_i(x)) -t \mid V_i(\sigma_i) \geq t] +(b-x) \\
 & = \sup_{t\geq 0}\mathbb{E}_\infty[ V_i(\sigma_i(x)) -t \mid V_i(\sigma_i) \geq t] +(b-x) \\
 & \leq \omega_i +(b-x).
\end{align*}
By conditional expectation, we can write 
\begin{equation}
\begin{aligned}
\mathbb{E}_j[ V_i(\sigma_i(x)) ]   = p(x) \cdot \mathbb{E}_j[ V_i(\sigma_i(x)) \mid V_i(\sigma_i(x)) \geq b-x  ] \\
+ (1-p(x)) \cdot \mathbb{E}_j[ V_i(\sigma_i(x)) \mid V_i(\sigma_i(x)) \leq -x].      
\end{aligned}
\label{eq:E_j_V_i_T}
\end{equation}
Substitute (\ref{eq:p_x}), (\ref{eq:E_j_sigma_i_expand}) and (\ref{eq:E_j_V_i_T}) into (\ref{eq:L_ij_inparts}), we get
$$
\begin{aligned}
  L_{ij}&(x;b)   =   \frac{ -p(x)  \mathbb{E}_j[ V_i(\sigma_i(x)) \mid V_i(\sigma_i(x)) \geq b-x  ]} {  \lambda_i   D_F(f_i ||g_i)  } \\
 & \quad + (1-p(x))  \\
 &\quad \quad \bigg(\frac{ \mathbb{E}_j[ - V_i(\sigma_i(x)) \mid V_i(\sigma_i(x)) \leq -x]   } {  \lambda_i   D_F(f_i ||g_i)  } +L_{ij}(0;b)  \bigg)\\
& \geq  e^{-(b-x)}  \frac{-(\omega_i + b-x)}{ \lambda_i   D_F(f_i ||g_i) }  \\
& \quad + \left( 1-e^{-(b-x)} \right) \left[ \frac{x}{ \lambda_i   D_F(f_i ||g_i) } + L_{ij}(0;b) \right] \\
& = \frac{x -e^{-(b-x)}(\omega_i + b) }{ \lambda_i   D_F(f_i ||g_i)} + \left( 1-e^{-(b-x)} \right)   L_{ij}(0;b) \\
& = l_{ij}(x;b).
\end{aligned}   
$$
\end{proof}

\section{Proof of the main theorem}
\label{proof:prob_upperbound}

We prove our main theorem, Theorem (\ref{thm:prob_misisolation}), as follows.

\begin{proof} 
\label{proof:prob_upperbound} 
The proof of our main theorem follows a structure analogous to that in \cite{warner2024worst}, 
with the key difference that their analysis is based on log-likelihood ratios, 
whereas ours employs differences of Hyvärinen scores. 
In line with their approach, we assume the threshold $b$ is sufficiently large so that 
\[
P_{\nu,j}(T_i > T_j \mid T>\nu) > 0,
\]
the justification of which is provided in Lemma~\ref{lem:strict_positive_conditional_P} below.

Recall that 
$$
P_{\nu, j} \left(   D=i \mid T(b) >\nu   \right) \leq
P_{\nu, j}\left(   T_i(b) \leq T_j(b)  \mid T(b)>\nu\right).
$$
Thus, to obtain a bound on $P_{\nu, j} \left(   D=i \mid T(b) >\nu   \right)$, it is enough to bound $P_{\nu, j}\left(   T_i(b) \leq T_j(b)  \mid T(b)>\nu\right)$. 

By the law of total probability, we get
\begin{equation}
\begin{aligned}
&P_{\nu, j}(T_i \leq T_j \mid T> \nu) = 1 - P_{\nu, j}(T_i > T_j \mid T> \nu)  \\
&= 1 - \frac{\mathbb{E}_{\nu, j}[T_i - T_j \land T_i \mid T> \nu]}{\mathbb{E}_{\nu, j}[T_i - T_j \mid T> \nu, T_i > T_j]}.
\end{aligned}    \label{eq:p_true_isolation}
\end{equation}

Note that $\{T_i > T_j \geq T>\nu\} \in \mathcal{F}_{T_j}$, 
since this event is determined at time $T_j$. 
Thus, the quantity $\mathbb{E}_{\nu, j}[T_i - T_j \mid T> \nu, T_i > T_j, \mathcal{F}_{T_j}]$ represents the additional time for channel $i$  to reach the threshold  $b$, as if it starts from $Z_i(T_j)$.  

Furthermore, using the monotonicity in  initial condition of channel $i$ (see   \eqref{eq:monotone_Lij} in Appendix~\ref{sec:properties} for details), we obtain  
\begin{equation}
\begin{aligned}
& \mathbb{E}_{\nu, j} [ T_i - T_j \mid T> \nu, T_i > T_j , \mathcal{F}_{T_j}] \\
& = \mathbb{E}_{\nu, j} \left[ T_i  \mid Z_i(0)= Z_i(T_j)\right] \cdot \mathbb{I} \left( T>\nu, T_i > T_j \right)   \\ 
& \leq  \mathbb{E}_j[T_i].     
\end{aligned} \label{eq:T_i_T_j}   
\end{equation}
Combine (\ref{eq:T_i_T_j}) and  (\ref{eq:p_true_isolation}), we then get 
\begin{equation}
\begin{aligned}
& P_{\nu, j}(T_i \leq T_j \mid T> \nu)  \le\\
& \frac{\mathbb{E}_j[T_i] - \mathbb{E}_{\nu, j}[T_i - \nu \mid T> \nu] }{\mathbb{E}_j[T_i]} + \frac{\mathbb{E}_{\nu, j}[T_j \land T_i - \nu \mid T> \nu]}{\mathbb{E}_j[T_i]}  .\label{eq:P_nu_j}
\end{aligned}    
\end{equation}

The second term of (\ref{eq:P_nu_j}) can be bounded as
\begin{align}
&\frac{ \mathbb{E}_{\nu, j}[T_j \land T_i - \nu \mid T > \nu] }{\mathbb{E}_j[T_i]}  
   \leq \frac{ \mathbb{E}_{\nu, j}[T_j - \nu \mid T > \nu] }{\mathbb{E}_j[T_i]}
 \nonumber \\
&= \frac{ \mathbb{E}_{\nu, j}[T_j  \mid Z_j(0) = Z_j(\nu) ]}{\mathbb{E}_j[T_i]} 
 \nonumber \\
&\leq \frac{\mathbb{E}_j[T_j \mid Z_j(0) = 0]}{\mathbb{E}_j[T_i]} \label{eq:ratio_a}  \\
&= \frac{\mathbb{E}_j[T_j]}{\mathbb{E}_j[T_i]}\leq e^{- b}\left(\frac{b}{\lambda_{j} I_j} +  \frac{\mathbb{E}_j\!\left[ ( U_j^+ )^2 \right]}{  I_j^2 }\right). \label{eq:ratio_b} 
\end{align}
The equation \eqref{eq:ratio_a} holds because $Z_j$ is stochastically monotone. 
We obtain \eqref{eq:ratio_b} using properties of the min-SCUSUM stopping rule, provided in Appendix~\ref{sec:properties} (see \eqref{eq:E_j_sigma_j} and \eqref{eq:E_j_sigma_i} for details),

We notice that on event $\{ T>\nu\}$,  for every channel $i$ at time $\nu$, the SCUSUM statistics $Z_i(\nu) < b$,  and $T_i-\nu$ is a positive-valued function, depending on $Z_i(\nu)$. Hence, we have 
\begin{align}
 & \mathbb{E}_{\nu, j}[T_i - \nu \mid  T> \nu, \mathcal{F}_\nu] \nonumber\\
&= \mathbb{E}_j \left[ T_i(b) \mid Z_i(0) = Z_i(\nu)\wedge b \right] \cdot \mathbb{I}(\{T> \nu\}) \nonumber \\
& = \mathbb{E}_{\infty} \left[ L_{ij}(Z_i(\nu)\wedge b ; \, b) \mid T> \nu \right].   \label{eq:T_i_nu}
\end{align}
Equation (\ref{eq:T_i_nu}) holds because $Z_i(\nu)$ has the same distribution under  $P_{\nu, j}$ and  $P_{\infty}$. 

The first term of (\ref{eq:P_nu_j}) can be bounded as
\begin{align}
&\frac{\mathbb{E}_j[T_i] - \mathbb{E}_{\nu, j}[T_i - \nu \mid T > \nu]}{\mathbb{E}_j[T_i]}  \nonumber\\
& = \frac{\mathbb{E}_j[T_i] - \mathbb{E}_{\infty} \left[ L_{ij}( Z_i(\nu)\wedge b  ;b) \mid T > \nu \right] }{\mathbb{E}_j[T_i]}  \nonumber \\
&\leq \frac{L_{ij}(0;b) - l_{ij}(0;b)  }{\mathbb{E}_j[T_i]}   \nonumber\\
& \quad + \frac{\mathbb{E}_{\infty} \left[l_{ij}(0;b) - l_{ij}( Z_i(\nu)\wedge b  ;b) \mid T > \nu \right] }{\mathbb{E}_j[T_i]} . \label{eq:first_term}   
\end{align} 
This equation \eqref{eq:first_term} holds because of lemma~\ref{lem:lowerbound L_ij}. 

Setting $x=0$ in Lemma~\ref{lem:lowerbound L_ij} gives the numerator of the first term in \eqref{eq:first_term} as
\begin{equation}
 L_{i j}(0 ; b)-l_{i j}(0 ; b) 
 = e^{-b} \bigg(\frac{  \omega_i +b}{\lambda_i \cdot D_F(f_i ||g_i)} +  L_{ij}(0;b) \bigg). \label{eq:L_ij_l_ij}
\end{equation}
We now upper bound the numerator of the second term in \eqref{eq:first_term}. Since there is no false alarm, $Z_i(\nu)\wedge b = Z_i(\nu)$, and hence
\begin{align}
&  \mathbb{E}_{\infty} \left[l_{i j}(0 ; b)-l_{i j}\left(Z_i(\nu)\wedge b ; b\right) \middle| T>\nu \right] \nonumber\\     
&= \mathbb{E}_{\infty} \left[-\int_0^{Z_i(\nu)} l_{i j}^{\prime}(x ; b)  \cdot \mathbb{I}\left(0 \leq Z_i(\nu) \leq b \right) d x \middle| T>\nu \right]  \label{eq:l_ij_diff_a} \\
&=  \frac{\int \mathbb{I}(T>\nu) \int_0^b -l_{i j}^{\prime}(x ; b) \cdot \mathbb{I}\left(Z_i(\nu) \geq x\right) d x  d  P_{\infty} }{P_{\infty}(T>\nu)} \nonumber \\
& = \int_0^b -l_{i j}^{\prime}(x ; b) P_{\infty}\left(Z_i(\nu) \geq x \mid T>\nu\right) d x  \label{eq:fubini}
\end{align}
\begin{align}
& \leq  \int_0^b \bigg[  e^{-(b-x) }  \bigg(\frac{\omega_i + b}{\lambda_i  D_F(f_i ||g_i)} + L_{ij}(0;b)\bigg)  \nonumber\\
& \qquad \qquad \qquad \qquad  P_{\infty}\left(Z_i(\nu) \geq x \mid T>\nu\right)\bigg] d x 
\label{eq:l_ij_diff_b} \\
& \leq \int_0^b    e^{-(b-x) }  \bigg(\frac{\omega_i + b}{\lambda_i  D_F(f_i ||g_i)} + L_{ij}(0;b)\bigg)  e^{-x} d x \label{eq:l_ij_diff_c} \\
& = b e^{-b}  \bigg(\frac{\omega_i + b}{\lambda_i  D_F(f_i ||g_i)} + L_{ij}(0;b)\bigg) .\label{eq:l_ij_diff_d} 
\end{align}
The equation \eqref{eq:l_ij_diff_a} follows from the fundamental theorem of calculus, where $l_{i j}^{\prime}(x ; b) $ is the derivative of $l_{i j}(x ; b) $. Equation \eqref{eq:fubini} follows from Fubini's theorem and 
\eqref{eq:l_ij_diff_b} comes from Lemma \ref{lem:bound l_ij}. The equation \eqref{eq:l_ij_diff_c} is true because of Lemma \ref{lem:main_condition_holds} below.
Substituting \eqref{eq:L_ij_l_ij} and \eqref{eq:l_ij_diff_d} into \eqref{eq:first_term}, and then combining with  \eqref{eq:ratio_b}, we can upper bound \eqref{eq:P_nu_j} as
$$
\begin{aligned}
& P_{\nu, j}\left(T_i \leq T_j \mid T>\nu\right) \\
& \leq \frac{\mathbb{E}_j\left[T_i\right]-\mathbb{E}_{\infty}\left[L_{i j}\left(Z_i(\nu) \wedge b ; b\right) \mid \sigma>\nu\right]}{\mathbb{E}_j\left[T_i\right]}+\frac{\mathbb{E}_j\left[T_j\right]}{\mathbb{E}_j\left[T_i\right]} \\
& \leq e^{- b} (1+b) \left(1 +\frac{1}{\lambda_{j} I_j } + \zeta_{ij}(b) \right) .
\end{aligned} 
$$
Where  $\zeta_{ij}(b)\triangleq \frac{(\omega_{i}+b)e^{-b} }{ \lambda_{i} D_F(f_i||g_i) } +   \frac{\mathbb{E}_j\left[  U_j^+  \right]^2}{  I_j^2 (1+b)}$,   and $\lim\limits_{b \rightarrow \infty} \zeta_{ij}(b) =0$.
\end{proof}

\section{Proofs of Some lemmas}

\begin{lemma}
\label{lem:main_condition_holds}
Let $Z_i(n)$ be as defined in \eqref{eq:SCUSUM_recursion} and $T(b)$ be the min-SCUSUM algorithm defined in \eqref{eq:sigma_b}. Then
\[
  P_{\infty}\left(Z_i(\nu) \geq x \mid T(b)>\nu\right) \leq e^{-x}, \quad \forall x \geq 0, \forall \nu \geq 0.    
\]
\end{lemma}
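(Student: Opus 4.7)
The plan is to combine the inter-channel independence under $P_\infty$, Harris's positive-correlation inequality (FKG for product measures), and a time-reversal plus Doob maximal-inequality bound. Since the channels are mutually independent under $P_\infty$, and both $\{Z_i(\nu)\ge x\}$ and $\{T_i(b)>\nu\}$ are $\sigma(X_{i,1},\dots,X_{i,\nu})$-measurable, the factorization $P_\infty(T(b)>\nu)=\prod_k P_\infty(T_k(b)>\nu)$ causes the factors over $k\ne i$ to cancel in the conditional probability, reducing the claim to the single-channel statement $P_\infty(Z_i(\nu)\ge x\mid T_i(b)>\nu)\le e^{-x}$. The case $x\ge b$ is trivial because $\{Z_i(\nu)\ge x,\,T_i(b)>\nu\}$ is empty, so I restrict attention to $0\le x<b$.

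Second, I would invoke Harris's inequality. Under $P_\infty$, the scalars $Y_{i,k}:=Y_i(X_{i,k})$, $k=1,\dots,\nu$, are iid, and a direct induction on the recursion $Z_i(n)=(Z_i(n-1)+Y_{i,n})^+$ shows that $Z_i(n)$, and hence $\max_{n\le\nu}Z_i(n)$, is coordinatewise non-decreasing in the vector $(Y_{i,1},\dots,Y_{i,\nu})$. Thus both $\{Z_i(\nu)\ge x\}$ and $\{T_i(b)\le\nu\}=\{\max_{n\le\nu}Z_i(n)\ge b\}$ are up-sets with respect to the componentwise order, and Harris's inequality for the iid product distribution yields $P_\infty(Z_i(\nu)\ge x,\,T_i(b)\le\nu)\ge P_\infty(Z_i(\nu)\ge x)\,P_\infty(T_i(b)\le\nu)$, which after rearrangement gives the clean bound $P_\infty(Z_i(\nu)\ge x\mid T_i(b)>\nu)\le P_\infty(Z_i(\nu)\ge x)$.

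Third, I would bound $P_\infty(Z_i(\nu)\ge x)\le e^{-x}$ by a time-reversal plus Doob argument. With $V_i(n)=\sum_{t=1}^n Y_{i,t}$ we have $Z_i(\nu)=\max_{0\le k\le\nu}(V_i(\nu)-V_i(k))$, and reversing the iid sequence through $\tilde Y_s:=Y_{i,\nu-s+1}$ gives $Z_i(\nu)\stackrel{d}{=}\max_{0\le m\le\nu}\tilde V(m)$ with $\tilde V(m)=\sum_{s=1}^m \tilde Y_s$. Since Lemma~\ref{lem:exist_lambda} provides $\mathbb{E}_\infty[e^{Y_{i,1}}]=1$, the process $e^{\tilde V(m)}$ is a non-negative $P_\infty$-martingale of unit initial value, and Doob's maximal inequality delivers $P_\infty(\max_m e^{\tilde V(m)}\ge e^x)\le e^{-x}$. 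Chaining the three inequalities completes the proof.

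The step I anticipate as the main obstacle is the Harris/FKG invocation: although the componentwise monotonicity of $Z_i(n)$ in $(Y_{i,1},\dots,Y_{i,\nu})$ follows cleanly from the recursion, one must apply the inequality carefully for an arbitrary product measure on $\mathbb{R}^\nu$, without assuming any lattice or sign structure on the $Y_{i,k}$. The remaining pieces—channel independence, time-reversal for the iid walk, and Doob's inequality on the exponential martingale—are standard tools in SCUSUM-type analyses.
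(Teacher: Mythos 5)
Your proof is correct, and while its skeleton matches the paper's (reduce to channel $i$ by independence, drop the no-alarm conditioning, then bound $P_\infty(Z_i(\nu)\ge x)$ by $e^{-x}$ via the unit-mean exponential martingale guaranteed by Lemma~\ref{lem:exist_lambda}), the middle step is handled by a genuinely different tool. The paper removes the conditioning on $\{T_i(b)>\nu\}$ by invoking Pollak's theorem for stochastically monotone Markov chains (citing \cite{pollak1986convergence}) and verifying stochastic monotonicity and right-continuity of the kernel of $Z_i$; you instead observe that $Z_i(\nu)$ and $\max_{n\le\nu}Z_i(n)$ are coordinatewise non-decreasing in the iid increments $(Y_{i,1},\dots,Y_{i,\nu})$, so $\{Z_i(\nu)\ge x\}$ and $\{T_i(b)\le\nu\}$ are increasing events, and Harris's inequality for product measures gives $P_\infty(Z_i(\nu)\ge x,\,T_i(b)\le\nu)\ge P_\infty(Z_i(\nu)\ge x)P_\infty(T_i(b)\le\nu)$, which rearranges to the same comparison $P_\infty(Z_i(\nu)\ge x\mid T_i(b)>\nu)\le P_\infty(Z_i(\nu)\ge x)$ (using only $P_\infty(T_i(b)>\nu)>0$). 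Your worry about Harris is unfounded: the inequality holds for arbitrary product probability measures on $\mathbb{R}^\nu$ with the componentwise order, with no sign or lattice assumptions on the $Y_{i,k}$, so this route is self-contained and avoids the external stochastic-monotonicity machinery. Your final step is also slightly more careful than the paper's terse appeal to ``the supermartingale inequality'': since $Z_i(\nu)=\max_{0\le k\le\nu}(V_i(\nu)-V_i(k))$ involves backward partial sums, the explicit time-reversal to $\max_{0\le m\le\nu}\tilde V(m)$ followed by Doob/Ville on the nonnegative unit-mean martingale $e^{\tilde V(m)}$ is exactly what makes that step rigorous; both arguments yield the same $e^{-x}$ bound.
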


\begin{proof}
The statement is true because
\begin{align}
P_{\infty}&\left(Z_i(\nu) \geq x \mid T(b)>\nu \right)  
   \nonumber\\
&= P_{\infty}\left(Z_i(\nu) \geq x \mid T_k(b)>\nu, \forall k \in \mathcal{I} \right)  
   \label{eq:condition_a} \\
&= P_{\infty}\left(Z_i(\nu) \geq x \mid T_i(b)>\nu \right)  
   \label{eq:condition_b} \\
&\leq  P_{\infty}\left(Z_i(\nu) \geq x \right)  
   \label{eq:condition_c} \\
&\leq e^{-x} \cdot \mathbb{E}_{\infty}\!\left[ \exp\!\left( \sum_{k=1}^\nu Y_i(X_{i,k}) \right) \right]  
   \label{eq:condition_d} \\
&= e^{-x}.    \label{eq:condition_e}
\end{align}
Here, \eqref{eq:condition_a} follows from the definition of $T(b)$ and \eqref{eq:condition_b} uses the independence of the individual channels. 
The equation \eqref{eq:condition_c} is true because of \cite[Theorem 1]{pollak1986convergence}, provided we can show that $Z_i$ is stochastically monotone. This is shown below. The equation \eqref{eq:condition_d} follows from the supermartingale inequality, and 
\eqref{eq:condition_e} holds because
$ \{\exp( \sum_{k=1}^n Y_i(X_{i,k})) \}_{n\ge 1} $ is a martingale under  $P_{\infty}$ with expectation 1.
Finally, $Z_i(n)$ is stochastically monotone since 
$P_\infty(Z_i(t) \geq y \mid Z_i(0) = x)$ is non-decreasing in $x$. Indeed, for $x_1 \leq x_2$,
\[
\begin{aligned}
  &P_\infty(Z_i(n) \geq y \mid Z_i(0) = x_1) \\
  &= P_\infty\!\left(\max_{1 \leq k \leq n }\sum_{t=k}^n Y_i(t)  \geq y-x_1 \right) \\   
  &\leq  P_\infty\!\left(\max_{1 \leq k \leq n }\sum_{t=k}^n Y_i(t)  \geq y-x_2 \right) \\ 
  &= P_\infty(Z_i(n) \geq y \mid Z_i(0) = x_2).  
\end{aligned}
\]
Moreover, $Z_i(n)$ is right-continuous in $x$ for every $y \geq 0$.  
For $\epsilon > 0$,
\[
\begin{aligned}
& P_\infty(Z_i(t) \geq y \mid Z_i(0) = x + \epsilon) 
   - P_\infty(Z_i(t) \geq y \mid Z_i(0) = x )     \\
&= P_\infty(Z_i(t) \geq y-x-\epsilon) - P_\infty(Z_i(t) \geq y-x) \\
&= P_\infty(y-x-\epsilon \leq Z_i(t) \leq y-x) \;\;\longrightarrow\; 0,
\quad \text{as } \epsilon \to 0.
\end{aligned}
\]
\end{proof}

\begin{lemma}
\label{lem:bound l_ij} 
The derivative of $l_{ij}(x;b)$ is
$$
l'_{ij}(x; b) = \frac{1 -e^{-(b-x)}(\omega_i + b)}{\lambda_i \cdot D_F(f_i ||g_i)}  -  e^{-(b-x)} L_{ij}(0;b).
$$
Also, $ -l'_{ij}(x; b) $ is bounded by
\begin{equation}
 -l'_{ij}(x; b) \leq e^{-(b-x) }  \bigg(\frac{\omega_i + b}{\lambda_i \cdot D_F(f_i ||g_i)} + L_{ij}(0;b)\bigg).
\end{equation}
\end{lemma}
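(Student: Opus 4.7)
The plan is to establish both claims by direct differentiation followed by a simple sign observation, without invoking any auxiliary machinery. First I would view $l_{ij}(x;b)$ as the sum of three $x$-dependent pieces, namely $x/(\lambda_i D_F(f_i\|g_i))$, $-(\omega_i+b)e^{-(b-x)}/(\lambda_i D_F(f_i\|g_i))$, and $-e^{-(b-x)} L_{ij}(0;b)$, together with the additive constant $L_{ij}(0;b)$ that does not depend on $x$ (here $L_{ij}(0;b)$ is a number, not a function of $x$, since it is the expected stopping time initialized at $0$).

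Next I would differentiate each term in $x$. The only nontrivial ingredient is $\frac{d}{dx} e^{-(b-x)} = e^{-(b-x)}$, because the exponent is $x-b$ and so differentiates to $+1$. Collecting terms gives exactly
$$l'_{ij}(x;b) = \frac{1 - (\omega_i+b)e^{-(b-x)}}{\lambda_i \cdot D_F(f_i\|g_i)} - e^{-(b-x)} L_{ij}(0;b),$$
which matches the stated expression.

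For the inequality on $-l'_{ij}(x;b)$, I would negate the identity above to obtain
$$-l'_{ij}(x;b) = \frac{(\omega_i+b)e^{-(b-x)}}{\lambda_i \cdot D_F(f_i\|g_i)} - \frac{1}{\lambda_i \cdot D_F(f_i\|g_i)} + e^{-(b-x)} L_{ij}(0;b),$$
and then observe that the middle term $-1/(\lambda_i\cdot D_F(f_i\|g_i))$ is nonpositive, since $\lambda_i>0$ and $D_F(f_i\|g_i)\geq 0$. Dropping this nonpositive constant yields the displayed upper bound. The main obstacle here is essentially nil: the whole argument is a careful chain-rule computation plus a sign check. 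The only place to be cautious is the sign inside $\frac{d}{dx} e^{-(b-x)}$, which is the standard source of sign slips in this kind of calculation; aside from that, the result is immediate.
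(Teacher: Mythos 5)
Your computation is correct and is exactly the routine differentiation of the explicit formula for $l_{ij}(x;b)$ from Lemma~\ref{lem:lowerbound L_ij} followed by dropping the nonpositive term $-1/(\lambda_i D_F(f_i\|g_i))$ (note $D_F(f_i\|g_i)>0$ here since $f_i\neq g_i$, which is also what makes the expression well defined), which is the argument the paper leaves implicit by stating the lemma without proof. Nothing further is needed.
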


\medskip

\begin{lemma}
\label{lem:strict_positive_conditional_P} 
 Fix  $i, j \in \mathcal{I}, i \neq j$, choose proper $\lambda_i>0$, s.t., $\mathbb{E}_{\infty} \left[e^{\lambda_i\cdot (S_H(X_{i,k}, f_i) -S_H(X_{i,k}, g_i) )} \right] = 1$,  then for large enough  $b$, the conditional probability
\begin{equation}
      P_{\nu, j}(T_i(b) > T_j(b) \mid T> \nu) >0.   \label{eq:strict_positive} 
\end{equation}
\end{lemma}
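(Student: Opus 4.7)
The intuition is that under $P_{\nu,j}$, after the change channel $j$'s SCUSUM statistic acquires positive drift $\lambda_j I_j > 0$ so $T_j(b)$ grows only linearly in $b$, while for any $i \neq j$ channel $i$ retains its pre-change negative drift and $T_i(b)$ grows at least exponentially in $b$ (cf.~\eqref{eq:E_j_sigma_i}). Once $b$ is large enough these two time-scales separate and the event $\{T_j(b) < T_i(b)\}$ must have positive probability. The plan is to isolate a favourable event on which the SCUSUM statistics are guaranteed to sit at $0$ at time $\nu$, and then reduce the problem to the unconditional comparison of two independent hitting times under $P_j$.

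Concretely, I would set $E_\nu \triangleq \{Y_k(X_{k,n}) \leq 0 \text{ for all } 1 \leq n \leq \nu,\; k \in \mathcal{I}\}$. By the recursion \eqref{eq:SCUSUM_recursion}, on $E_\nu$ every statistic $Z_k$ stays at $0$ throughout $[1,\nu]$, so $E_\nu \subseteq \{T(b) > \nu\}$ and $Z_k(\nu) = 0$ for every $k$. Since $E_\nu \in \mathcal{F}_\nu$, its probability is the same under $P_{\nu,j}$ and $P_\infty$. The drift condition $\mathbb{E}_\infty[Y_k] = -\lambda_k D_F(f_k \parallel g_k) < 0$ forces $P_\infty(Y_k \leq 0) > 0$, and channel-wise independence yields $P_\infty(E_\nu) = \prod_{k\in\mathcal{I}} P_\infty(Y_k \leq 0)^{\nu} > 0$; in particular $P_{\nu,j}(T(b) > \nu) \geq P_\infty(E_\nu) > 0$.

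Because $\mathcal{F}_\nu$-measurable events are independent of the future observations, conditioning on $E_\nu$ genuinely restarts the $d$-dimensional chain $(Z_1,\ldots,Z_d)$ at the origin with post-$\nu$ dynamics governed by $g_j$ on stream $j$ and $\{f_i\}_{i\neq j}$ on the rest; hence, given $E_\nu$, the pair $(T_j(b)-\nu,\, T_i(b)-\nu)$ has the same joint law as $(T_j(b), T_i(b))$ under $P_j$, and channel independence makes these two times independent. Pick $t \triangleq 2\,\mathbb{E}_j[T_j(b)]$, which by \eqref{eq:E_j_sigma_j} is linear in $b$. Markov's inequality gives $P_j(T_j(b) \leq t) \geq 1/2$, while the identity $\mathbb{E}_j[T_i(b)] = \sum_{k\geq 0} P_j(T_i(b) > k)$ shows that $P_j(T_i(b) > t) = 0$ would force $\mathbb{E}_j[T_i(b)] \leq t$, contradicting $\mathbb{E}_j[T_i(b)] \geq e^{b}$ from \eqref{eq:E_j_sigma_i} as soon as $b$ is large enough that $e^b > t$. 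Independence then yields $P_j(T_j(b) \leq t,\; T_i(b) > t) > 0$, so
\begin{equation*}
P_{\nu,j}\!\left(E_\nu \cap \{T_i(b) > T_j(b)\}\right) \;=\; P_\infty(E_\nu)\cdot P_j\!\left(T_i(b) > T_j(b)\right) \;>\; 0,
\end{equation*}
and dividing by $P_{\nu,j}(T(b) > \nu) \in (0,1]$ delivers the strict positivity claim.

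The main obstacle is executing Step 2 cleanly: verifying that $E_\nu$ is a nontrivial event (which ultimately rests on the strict inequality $\mathbb{E}_\infty[Y_k] < 0$ combined with the atom that the $(\cdot)^+$ reflection creates at the origin) and that the vector min-SCUSUM chain admits a genuine distributional restart once conditioned on $E_\nu$. Both steps are essentially applications of the strong Markov property together with the channel-wise independence built into the model \eqref{eq:CPmodel1}; everything else reduces to a straightforward Markov-inequality argument applied to the moment bounds already in hand via \eqref{eq:E_j_sigma_j} and \eqref{eq:E_j_sigma_i}.
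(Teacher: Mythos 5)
Your proof is correct, but it takes a genuinely different route from the paper's. The paper conditions on the event $\{Z_i(\nu)\le b/2\}$, whose conditional probability given $\{T>\nu\}$ is at least $1-e^{-b/2}$ by Lemma~\ref{lem:main_condition_holds}, then invokes stochastic monotonicity (as in \cite{warner2024worst}) to reduce to $P_j(T_i(b/2)>T_j(b))$, and finally argues by contradiction: if that probability were zero, then $\mathbb{E}_j[T_i(b/2)]\le\mathbb{E}_j[T_j(b)]$, which is incompatible with $\mathbb{E}_j[T_i(b/2)]\ge e^{b/2}$ from \eqref{eq:E_j_sigma_i} and the linear-in-$b$ bound \eqref{eq:E_j_sigma_j}. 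You instead restart from the concrete event $E_\nu$ on which all increments are nonpositive up to $\nu$, so every statistic sits at zero and no alarm has fired; this has positive probability since $\mathbb{E}_\infty[Y_k]=-\lambda_k D_F(f_k\|g_k)\le 0$ forces $P_\infty(Y_k\le 0)>0$, and the Markov property at the deterministic time $\nu$ (not strong Markov, as you say) plus channel independence reduces the problem to $P_j(T_i(b)>T_j(b))>0$, which you get by a time-scale separation via Markov's inequality using the same two moment bounds \eqref{eq:E_j_sigma_i} and \eqref{eq:E_j_sigma_j}. Your argument is more self-contained — it bypasses Lemma~\ref{lem:main_condition_holds} and the stochastic-monotonicity step — at the cost of a lower bound $P_\infty(E_\nu)\,P_j(T_i>T_j)$ that decays geometrically in $\nu$, whereas the paper's bound $(1-e^{-b/2})P_j(T_i(b/2)>T_j(b))$ is uniform over $\nu$; for the lemma as stated (strict positivity at each fixed $\nu$, needed only so the conditional expectation in \eqref{eq:p_true_isolation} is well defined) this non-uniformity is harmless. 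Two small polish points: the tail-sum identity gives $\mathbb{E}_j[T_i(b)]\le\lceil t\rceil$ rather than $t$ (adjust the contradiction threshold to $e^b>t+1$), and you should note explicitly that $E_\nu\subseteq\{T(b)>\nu\}$ uses $b>0$, which the theorem does assume.
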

Remark:  If \eqref{eq:strict_positive} does not hold, then
$$
P_{\nu, j}(T_i > T_j \mid T> \nu) = \frac{
P_{\nu, j}(T_i > T_j , T> \nu)}{ P_{\nu, j}(T> \nu)}=0 . 
$$
This means the event $\{ T_i > T_j , T> \nu \}$ is is a set with measure 0 under $P_{\nu, j}$. This motivates Lemma~\ref{lem:strict_positive_conditional_P}, 
since otherwise the conditional expectation 
$\mathbb{E}_{\nu,j}[T_i - T_j \mid T>\nu,\, T_i > T_j]$ 
appearing later in \eqref{eq:p_true_isolation} would not be well defined. 
This issue arises in the proof of Theorem~\ref{thm:prob_misisolation} 
(see Section~\ref{proof:prob_upperbound}).
\vspace{0.5em}
\begin{proof}
We follow the proof strategy of \cite{warner2024worst}. 
By the law of total probability,
\begin{align}
   P_{\nu,j}(T_i > T_j \mid T> \nu) & \geq P_{\nu,j}(Z_i(\nu) \leq b/2 \mid T> \nu) \nonumber\\
&  \cdot  P_{\nu,j}(T_i > T_j \mid Z_i(\nu) \leq b/2, T> \nu).   \label{eq:two_parts}
\end{align}
The first factor in \eqref{eq:two_parts} is bounded away from zero by Lemma~\ref{lem:main_condition_holds}, 
$$
P_{\nu,j}(Z_i(\nu) \leq b/2 \mid T> \nu) \geq 1 - e^{-b/2} > 0. 
$$ 
It remains to show the second factor in \eqref{eq:two_parts} is also positive. 

Exactly as in \cite{warner2024worst}, conditioning on $\mathcal{F}_{\nu}$ 
and applying the monotonicity of $Z_i$ and $Z_j$ yields
\begin{equation}
  P_{\nu,j}(T_i > T_j \mid Z_i(\nu)\leq b/2,\, T>\nu) 
    \;\geq\; P_j(T_i(b/2) > T_j(b)).
\label{eq:stochastic_order}
\end{equation}
Since this step is no different from \cite{warner2024worst}, 
we omit the detailed reasoning and only state the result.

From this point onward, our proof departs from \cite{warner2024worst}.  
In particular, we show that the probability in \eqref{eq:stochastic_order} 
is strictly positive. Suppose, for the sake of contradiction, that it is not; 
then we must have
\begin{equation}
P_j( T_i(b/2) \leq T_j(b) ) =1.   
\label{eq:contradiction}
\end{equation} 
Then combine \eqref{eq:contradiction}  and \eqref{eq:E_j_sigma_j}, we can write 
\begin{equation}
\mathbb{E}_j[T_i(b/2)] \leq \mathbb{E}_j[T_j(b)] \leq \frac{b}{\lambda_j {I_j}} + \frac{\mathbb{E}_j\left[ ( U_j^+ )^2 \right]}{  I_j^2 }. \label{eq:contradiction_part1}  
\end{equation}
By (\ref{eq:E_j_sigma_i}), we can write
\begin{equation}
 \mathbb{E}_j[T_i(b/2)] \geq e^{b/2}.   \label{eq:contradiction_part2}  
\end{equation}
Combining (\ref{eq:contradiction_part1}) and (\ref{eq:contradiction_part2}), we then get 
\begin{equation}
 e^{b/2} \leq \mathbb{E}_j[T_j(b)] \leq \frac{b}{\lambda_j {I_j}} + \frac{\mathbb{E}_j\left[ ( U_j^+ )^2 \right]}{  I_j^2 }, \quad     \forall b>0.  \label{eq:contradiction_result}
\end{equation}
Obviously, $\lim\limits_{b \rightarrow \infty}   e^{b/2} - \left(   \frac{b}{\lambda_j {I_j}} + \frac{\mathbb{E}_j\left[ ( U_j^+ )^2 \right]}{  I_j^2 } \right)  >0$. Therefore, (\ref{eq:contradiction_result}) does not hold for sufficiently large $b$.
\end{proof}

\section*{Acknowledgment}
This work was supported by the U.S. National Science Foundation under awards numbers 2334898 and 2334897.










\bibliographystyle{IEEEtran}
\bibliography{bibtex/bib/IEEEabrv,bibtex/bib/IEEEexample}

%







\end{document}